%% file: main.tex
\documentclass[runningheads]{llncs}

\usepackage[T1]{fontenc}
\usepackage{lmodern}
\usepackage{amsmath,amssymb,bm}
\usepackage{booktabs}
\usepackage{multirow}
\usepackage{array}
\usepackage{enumitem}
\usepackage[linesnumbered,ruled,vlined]{algorithm2e}
\usepackage{graphicx}
\usepackage{float}
\usepackage{wrapfig}
\usepackage{tikz}
\usetikzlibrary{arrows.meta,automata,positioning,calc}
\usepackage{pgfplots}
\pgfplotsset{compat=1.18}
\usepackage{xcolor}
\usepackage[colorlinks=true,linkcolor=blue,citecolor=blue,urlcolor=blue]{hyperref}
\usepackage{orcidlink}
\usepackage{marvosym}

\input{commands}

\begin{document}

\title{Compositional Automata Learning of \\Timed Synchronous Systems}
\titlerunning{Compositional Automata Learning of Timed Synchronous Systems}

\author{
    Hanyue Chen\,\orcidlink{0009-0008-9587-0178} \and
    Chang Liu\,\orcidlink{0009-0008-0518-0676} \and
    Miaomiao Zhang\textsuperscript{(\Letter)}  
} 
\authorrunning{H. Chen et al.}
\institute{School of Computer Science and Technology, Tongji University, Shanghai, China\\
\email{\{2111285,2433258,miaomiao\}@tongji.edu.cn}
}

\maketitle

\begin{abstract}
\input{Sections/0-Abstract}

\keywords{Active model learning \and Timed systems \and Compositional learning  \and Real-time automata \and Black-box systems}

\end{abstract}

\input{Sections/1-Introduction}

\input{Sections/2-Preliminaries}
\input{Sections/3-Problem_Formulation}
\input{Sections/4-Query_Adapter}
\input{Sections/5-Component_Learner}

\input{Sections/6-Algorithm}

\input{Sections/7-Experiment}
\input{Sections/8-Related_Work}

\input{Sections/9-Conclusion}

\bibliographystyle{splncs04}
\bibliography{reference}

\clearpage
\appendix
\input{Sections/Appendix}

\end{document}

%% file: commands.tex
\newcommand{\AlgorithmFont}{\fontsize{8pt}{9pt}\selectfont}
\SetAlFnt{\AlgorithmFont}

\SetAlgoSkip{AlgoOuterSkip}
\makeatletter
\newenvironment{compactalgorithm}[2][htbp]
{%
  \let\compactalgorithm@finish\@algocf@finish
  \def\@algocf@finish{%
    \compactalgorithm@finish
    \vspace{#2}%
  }%
  \begin{algorithm}[#1]%
}
{%
  \end{algorithm}%
}
\newenvironment{compacttable}[2][htbp]
{%
  \def\compacttable@bottomskip{#2}%
  \begin{table}[#1]%
}
{%
  \par\vspace{\compacttable@bottomskip}%
  \end{table}%
}
\makeatother

\newcommand{\project}[2]{{#1}{\downharpoonleft_{#2}}}
\newcommand{\rabs}[1]{\mathopen{[\![}{#1}\mathclose{]\!]}}

\newcommand{\NR}{\mathbb{R}_{\ge 0}}

\newcommand{\unk}{\mathord{?}}

\newcolumntype{P}[1]{>{\raggedright\arraybackslash}p{#1}}
\newcolumntype{Y}[1]{>{\centering\arraybackslash}p{#1}}
\newcolumntype{C}{>{\centering\arraybackslash}m{0.63in}}
\newcolumntype{R}{>{\raggedleft\arraybackslash}m{0.44in}@{\hspace{0.19in}}}
\newcolumntype{G}{>{\centering\arraybackslash}m{0.41in}}
\newcolumntype{N}{>{\centering\arraybackslash}m{0.30in}}
\newcolumntype{M}{>{\centering\arraybackslash}m{0.60in}}
\newcolumntype{H}{>{\centering\arraybackslash}m{0.37in}}
\newcolumntype{T}{>{\centering\arraybackslash}m{0.58in}}
\newcolumntype{U}{>{\raggedleft\arraybackslash}m{0.47in}@{\hspace{0.11in}}}

  {\hfill$\Box$\par}

%% file: Sections/0-Abstract.tex
Active model learning is a crucial technique for automatically constructing formal models of black-box timed systems.
However, existing learning approaches typically construct a single monolithic automaton for a timed system with multiple synchronous components, obscuring its component structure and hindering component-level analysis.
We propose a compositional learning approach that constructs a deterministic real-time automaton (RTA) for each component with access only to the target system as a whole.
Upon termination, the learned RTAs form a network that exhibits exactly the same timed behaviors as the target system.
Within the respective resource limits, our implementation terminates successfully on 213 of the 240 ($88.8\%$) randomly generated benchmark instances and all six adapted case-study benchmarks, and requires less learning time than monolithic approaches on some instances.

%% file: Sections/1-Introduction.tex
\section{Introduction}
\emph{Timed systems} are widely used in real-world applications, including embedded systems, automotive control, and communication protocols.
Model-based verification of such systems relies on formal models that faithfully capture their real-time behaviors.
However, building such models manually is labor-intensive and error-prone, especially when the system is accessible only as a black box.

\emph{Active model learning}~\cite{Model_learning}, which collects information from a target system on demand to construct its model automatically, has therefore become an important modeling technique.
The $\mathtt{L}^*$ algorithm~\cite{Angluin87} formalizes this interaction through a \emph{learner} that issues \emph{membership queries} and \emph{equivalence queries} to a \emph{teacher} with access to the target system.
A membership query asks whether the target system can exhibit a given behavior, whereas an equivalence query asks whether a hypothesis matches the target system.
By organizing the teacher's answers in an \emph{observation table}, the learner constructs a hypothesis and incrementally refines it until the teacher returns true for an equivalence query.
Subsequent research has applied active model learning to various model classes~\cite{I_O,register1,mealy} and developed more efficient learning algorithms~\cite{TTT,Lsharp,Lsharp2}.
For timed systems, active learning algorithms have been developed for real-time automata (RTAs)~\cite{RTA_learning}, Mealy machines with timers~\cite{mealy_multi_timers,vaandrager2023learning}, and timed automata~\cite{DOTA_learning,HSCC24_Multi_Clock,CAV2023_Multi_Clock}.
Tree-based information structures have also been introduced to improve the efficiency of learning timed systems~\cite{TTT_TA,Science_China_TTT,timed_L_sharp}.
However, these approaches construct a single monolithic automaton for the entire timed system.
For a timed system composed of multiple synchronous components, a monolithic model obscures the component structure, making it difficult to perform component-level analysis or apply verification techniques such as compositional abstraction~\cite{AGR_2023,Broadcast_2025,lin2013learning}.
Moreover, the global state space can grow rapidly with the number and complexity of components, increasing the potential cost of monolithic learning.

Motivated by these considerations, this paper presents a compositional active learning approach for timed synchronous systems.
Given the number of system components and their respective action sets, the approach constructs a separate deterministic RTA for each component.
Upon termination, the learned RTAs form a network that exhibits exactly the same timed behaviors as the target system.
Building on compositional active learning for untimed synchronous systems~\cite{Compositional_DFA_learning}, our approach comprises multiple component learners, a query adapter, and a system-level teacher.
Each learner learns an individual component by issuing component-level queries, such as whether the component can exhibit a given behavior.
In contrast, the teacher can answer only system-level queries, such as whether the entire system can exhibit a given behavior.
The query adapter translates component-level queries into system-level queries and derives component-level answers from the teacher's responses.

Although this architecture has been successfully applied to untimed synchronous systems, extending it to timed synchronous systems requires addressing several challenges.
First, continuous time gives rise to infinitely many time delays, so information collected from individual timed words must be generalized.
The adapter abstracts timed words as \emph{RTA region words} and represents confirmed component behaviors as \emph{positive-evidence RTAs}.
Second, answering a component-level membership query may require lifting the queried component behavior to a system-level behavior that respects the timing and synchronization constraints of all components.
The adapter uses timed reachability over these positive-evidence RTAs to find such a behavior.
Finally, with only system-level access, the adapter may temporarily be unable to answer some component-level membership queries and may return \emph{spurious counterexamples} to component-level equivalence queries.
Each component learner runs an $\mathtt{L}^*$-style procedure that organizes incomplete timed information in a \emph{timed observation table with unknown entries} (TOTU), uses SMT to construct RTA hypotheses, and employs a backtracking mechanism to handle possible spurious counterexamples.

In summary, the contributions of this paper are as follows.
\vspace{-1mm}
\begin{itemize}[topsep=2pt,itemsep=1pt,parsep=1pt]
  \item 
  We present a compositional learning approach that learns an RTA for each component of a timed synchronous system using only system-level access.
  \item 
  We develop a query adapter equipped with positive-evidence RTAs and TOTU-based component learners to address the challenges encountered in compositional learning of timed synchronous systems.
  \item
  We implement and evaluate our approach, which terminates successfully on 213 of the 240 ($88.8\%$) randomly generated benchmark instances and all six adapted case-study benchmarks within the respective resource limits, and requires less learning time than monolithic approaches on some instances.
\end{itemize}

%% file: Sections/2-Preliminaries.tex
\section{Preliminaries}
\label{Sec:Preliminaries}

Let $\mathbb{R}_{\geq 0}$ and $\mathbb{N}$ be the sets of non-negative reals and natural numbers, respectively. $\mathbb{B}=\{\top,\bot\}$ denotes the Boolean set, where $\top$ is true and $\bot$ is false.
Let $\Sigma$ be a finite action set, called the \emph{alphabet}.
A \emph{timed word} over $\Sigma$ is a finite sequence $w = (d_1,\sigma_1)(d_2,\sigma_2)\cdots(d_n,\sigma_n) \in (\NR \times \Sigma)^*$.
For $1\le i\le n$, $d_i$ is the delay before action $\sigma_i$ occurs, measured from the initial instant if $i=1$ and from the occurrence of $\sigma_{i-1}$ if $i>1$.
Each pair $(d_i,\sigma_i)$ is called a \emph{timed action}, and $|w|$ denotes the number of timed actions in $w$.
The empty word is denoted by $\epsilon$.

Let $\mathcal{X}$ be a set of real-valued variables, called \emph{clocks}.
A \emph{clock valuation} over $\mathcal{X}$ is a function $v:\mathcal{X}\to\NR$.
We write $\{x_1\mapsto t_1,\dots,x_n\mapsto t_n\}$ for the clock valuation that assigns value $t_i$ to clock $x_i$, for $1\le i\le n$.
All clocks grow at the same rate as time elapses.
For $d\in\NR$, we write $v+d$ for the valuation that maps each clock $x\in\mathcal{X}$ to $v(x)+d$.
We write $\Phi(\mathcal{X})$ for the set of constraints over $\mathcal{X}$, generated by $\phi ::= x \le \kappa \mid x < \kappa \mid \kappa \le x \mid \kappa < x \mid \phi_1 \wedge \phi_2$, where $x\in\mathcal{X}$ and $\kappa\in\mathbb{N}$.
We write $v\models\phi$ if constraint $\phi$ evaluates to true under $v$.
A \emph{real-time automaton} (RTA) is a finite automaton equipped with a single clock.

\begin{definition}[Real-Time Automaton]
\label{def:rta}
A \emph{real-time automaton} is a tuple $\mathcal{A} = (Q, q_0, \Sigma, x, \Delta)$, where $Q$ is a finite set of \emph{locations}, $q_0 \in Q$ is the initial location, $\Sigma$ is the alphabet, $x$ is the single clock, and $\Delta \subseteq Q \times \Sigma \times \Phi(\{x\}) \times Q$ is the set of transitions.
\end{definition}

Each transition $\delta = (q,\sigma,\phi,q')\in \Delta$ represents a jump from the source location $q$ to the target location $q'$ by performing an action $\sigma$ when its guard $\phi\in\Phi(\{x\})$ is satisfied by the current clock valuation.
After each transition, the clock $x$ is reset to 0.
A \emph{state} of $\mathcal{A}$ is a pair $(q,v)$, where $q \in Q$ is a location and $v$ is a clock valuation over $\{x\}$.
A \emph{run} of $\mathcal{A}$ on a timed word $w = (d_1,\sigma_1)(d_2,\sigma_2)\cdots(d_n,\sigma_n)$ is a sequence $ 
  \rho=(q_0,v_0) \xrightarrow{d_1,\sigma_1} (q_1,v_1)
  \xrightarrow{d_2,\sigma_2} \cdots
  \xrightarrow{d_n,\sigma_n} (q_n,v_n) $
such that $v_0(x)=0$ and, for every $1\leq j\leq n$, there exists a transition $(q_{j-1},\sigma_j,\phi_j,q_j)\in \Delta$ with $v_{j-1}+d_j\models \phi_j$, and $v_j(x)=0$.
We call $\mathcal{A}$ \emph{deterministic} if it has at most one run on any given timed word.

In this paper, we focus on the execution behaviors of timed systems and use RTAs as executability models without distinguishing \emph{accepting} and \emph{non-accepting} locations.
On this basis, the \emph{timed language} of $\mathcal{A}$ is defined as
\begin{equation*}
  \mathcal{L}(\mathcal{A}) = \bigl\{\, w \in (\NR \times \Sigma)^* \;\bigm|\; \mathcal{A} \text{ has a run on } w \,\bigr\}.
\end{equation*}
Thus, $\mathcal{L}(\mathcal{A})$ is prefix-closed, since every prefix of a run is again a run.
To model a system composed of multiple interacting components, RTAs can be composed into a \emph{network of real-time automata} (NRTA).

\begin{definition}[Network of Real-Time Automata]
\label{def:nrta}
A \emph{network of real-time automata} is a tuple $\mathcal{N} = (\mathcal{A}_1, \ldots, \mathcal{A}_n)$, where each component $\mathcal{A}_i = (Q_i, q_0^i, \Sigma_i,$ $x_i, \Delta_i)$ is an RTA and the clocks $x_1,\ldots,x_n$ are pairwise distinct. 
\end{definition}

For an NRTA $\mathcal{N}$, we instantiate the generic notation above as $\Sigma=\bigcup_{i=1}^n\Sigma_i$ for its alphabet and $\mathcal{X}=\{x_1,\ldots,x_n\}$ for its clock set.
For a valuation $v$ over $\mathcal{X}$ and $\sigma\in\Sigma$, $v[\sigma]$ is a new valuation over $\mathcal{X}$ with $v[\sigma](x_i)=0$ if $\sigma\in\Sigma_i$ and $v[\sigma](x_i)=v(x_i)$ otherwise, for every $1\le i\le n$.
We say that $\mathcal{N}$ has a run on a timed word $u=(d_1,\sigma_1)(d_2,\sigma_2)\cdots(d_m,\sigma_m)\in(\NR\times\Sigma)^*$ if there is a sequence
\begingroup
\setlength{\abovedisplayskip}{1.0mm}
\setlength{\abovedisplayshortskip}{1.0mm}
\setlength{\belowdisplayskip}{1.2mm}
\setlength{\belowdisplayshortskip}{1.2mm}
\[
  (\mathbf{q}_0,v_0)
  \xrightarrow{d_1,\sigma_1}
  (\mathbf{q}_1,v_1)
  \xrightarrow{d_2,\sigma_2}
  \cdots
  \xrightarrow{d_m,\sigma_m}
  (\mathbf{q}_m,v_m),
\]
\endgroup
\noindent
where each \emph{location vector} $\mathbf{q}_j=(q_j^1,\ldots,q_j^n)\in Q_1\times\cdots\times Q_n$, $\mathbf{q}_0=(q_0^1,\ldots,q_0^n)$, $v_0(x_i)=0$ for every $1\le i\le n$, and for every $1\le j\le m$:
\begin{enumerate}[topsep=5pt,itemsep=1pt,parsep=1pt]
  \item
		  For every component $\mathcal{A}_i$, there is a transition $(q_{j-1}^i,\sigma_j,\phi_j^i,q_j^i)\in\Delta_i$ satisfying $v_{j-1}+d_j\models\phi_j^i$ if $\sigma_j\in\Sigma_i$, and $q_j^i=q_{j-1}^i$ otherwise;
  \item
  $v_j = (v_{j-1}+d_j)[\sigma_j]$.
\end{enumerate}
\noindent
Thus, NRTA components follow \emph{multicast synchronization}: an action $\sigma\in\Sigma$ can occur only when all components with $\sigma\in\Sigma_i$ can take a $\sigma$-transition, and they take it simultaneously.
The \emph{timed language} of an NRTA $\mathcal{N}$ is
\begingroup
\setlength{\abovedisplayskip}{1.7mm}
\setlength{\abovedisplayshortskip}{1.7mm}
\setlength{\belowdisplayskip}{2mm}
\setlength{\belowdisplayshortskip}{2mm}
\[
  \mathcal{L}(\mathcal{N})
  =
  \bigl\{\, u\in(\NR\times\Sigma)^*
  \;\bigm|\;
  \mathcal{N}\text{ has a run on }u\,\bigr\}.
\]
\endgroup
Since every prefix of an NRTA run is again an NRTA run, $\mathcal{L}(\mathcal{N})$ is prefix-closed.

For a timed word $u\in(\NR\times\Sigma)^*$, the behavior observed by component $\mathcal{A}_i$ consists only of actions from its alphabet $\Sigma_i$.
This behavior is captured by the \emph{projection} $\project{u}{i}$, which keeps only actions in $\Sigma_i$ and sums the delays between consecutive retained actions.
For example, if $\Sigma_i=\{a,c\}$ and $u=(1,a)(2,b)(3,c)(4,b)(5,a)$, then $\project{u}{i}=(1,a)(5,c)(9,a)$, where the delay before $c$ is $2+3=5$ and the delay before the final $a$ is $4+5=9$.
With this projection, the language of an NRTA admits the following component-wise characterization.

\vspace{-1mm}
\begin{lemma}
\label{lemma:proj_sound}
For an NRTA $\mathcal{N}=(\mathcal{A}_1,\ldots,\mathcal{A}_n)$ and a timed word $u\in(\NR\times\Sigma)^*$,
\vspace{-3mm}
\[
  u\in\mathcal{L}(\mathcal{N})
  \quad \Longleftrightarrow \quad
  \forall\,1\le i\le n:\project{u}{i}\in\mathcal{L}(\mathcal{A}_i).
\]
\end{lemma}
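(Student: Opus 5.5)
We prove both directions by relating each NRTA run step to the steps of the component runs, via an explicit correspondence between positions. Fix $u=(d_1,\sigma_1)\cdots(d_m,\sigma_m)$ and a component index $i$. Let $j_1<j_2<\cdots<j_k$ be the positions with $\sigma_{j}\in\Sigma_i$. By the definition of projection, $\project{u}{i}=(e_1,\sigma_{j_1})\cdots(e_k,\sigma_{j_k})$ with $e_l=\sum_{j=j_{l-1}+1}^{j_l} d_j$ (taking $j_0=0$).

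The key invariant concerns any sequence of valuations $v_0,\ldots,v_m$ built by $v_0\equiv 0$ and $v_j=(v_{j-1}+d_j)[\sigma_j]$. For $j_{l-1}\le j<j_l$ (with $j_l$ possibly replaced by $m+1$ after the last retained position), it states that
\[
v_j(x_i)=\sum_{t=j_{l-1}+1}^{j} d_t .
\]
In particular, $(v_{j_l-1}+d_{j_l})(x_i)=e_l$. This follows by a simple induction on $j$. Clock $x_i$ is reset exactly at positions where $\sigma_j\in\Sigma_i$. Between such positions it only accumulates delays, and the reset rule $v[\sigma]$ never touches $x_i$ for $\sigma\notin\Sigma_i$. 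Since a guard $\phi\in\Phi(\{x_i\})$ only mentions $x_i$, we get $v_{j_l-1}+d_{j_l}\models\phi$ iff $\{x_i\mapsto e_l\}\models\phi$.

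For ($\Rightarrow$), take an NRTA run $(\mathbf{q}_0,v_0)\to\cdots\to(\mathbf{q}_m,v_m)$ and define the component sequence $(q^i_{j_0},0)\xrightarrow{e_1,\sigma_{j_1}}(q^i_{j_1},0)\cdots(q^i_{j_k},0)$, with $q^i_{j_0}=q^i_0$. Condition~1 of the NRTA run says that $q^i$ is unchanged at non-$\Sigma_i$ steps, so $q^i_{j_l-1}=q^i_{j_{l-1}}$. It also says that at step $j_l$ there is a transition $(q^i_{j_{l-1}},\sigma_{j_l},\phi,q^i_{j_l})$ whose guard holds at $v_{j_l-1}+d_{j_l}$, hence, by the invariant, at $\{x_i\mapsto e_l\}$. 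This is a run of $\mathcal{A}_i$ on $\project{u}{i}$.

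For ($\Leftarrow$), for each $i$ fix a run of $\mathcal{A}_i$ on $\project{u}{i}$ with locations $p^i_0,\ldots,p^i_k$. Build the location vectors by setting $q^i_j=p^i_l$ where $l=|\{t\le j:\sigma_t\in\Sigma_i\}|$, and define $v_j$ by the recurrence. Condition~2 then holds by construction. For condition~1, the non-$\Sigma_i$ case holds because $l$ does not change. In the $\Sigma_i$ case ($j=j_l$), the component transition $(p^i_{l-1},\sigma_{j_l},\phi,p^i_l)$ is enabled at $e_l$, hence, by the invariant, at $v_{j-1}+d_j$. Since all components are handled at once, with a single shared valuation sequence, this gives an NRTA run.

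The main obstacle is really only the bookkeeping in the invariant. It requires keeping the indexing of retained positions straight, including the boundary cases of no retained actions ($k=0$, where the claim is trivially the empty run) and of the first retained action (whose delay is measured from time $0$). Determinism is not needed anywhere, since we only use the existence of runs.
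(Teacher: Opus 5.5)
Your proof is correct and follows essentially the same approach as the paper: the same indexing $j_1<\cdots<j_k$ with accumulated delays, the same observation that $x_i$ equals $e_l$ just before each retained action, extraction of the component transitions for ($\Rightarrow$), and assembly of the component runs into a network run for ($\Leftarrow$). The only difference is presentational: you define the location vectors directly via the counting function $l=|\{t\le j:\sigma_t\in\Sigma_i\}|$ and isolate the clock invariant as a separate lemma, whereas the paper maintains the location and clock correspondence by a single step-by-step induction.
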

This lemma follows from the definitions of NRTA languages and projections.
The proof is provided in Appendix~\ref{app:proof-projection}.

%% file: Sections/3-Problem_Formulation.tex
\section{Learning Setting and Approach Overview}
\label{Sec:Problem}

\subsection{Learning Setting}
\label{Sec:Setting}
Consider a multi-component black-box timed system represented by an NRTA $\mathcal{N} = (\mathcal{A}_1,\ldots,\mathcal{A}_n)$, where each $\mathcal{A}_i$ is a deterministic RTA.
The goal is to automatically construct an NRTA $\mathcal{N}'=(\mathcal{H}_1,\ldots,\mathcal{H}_n)$ using \emph{limited prior information} and a \emph{system-level teacher}, such that $\mathcal{L}(\mathcal{N}')=\mathcal{L}(\mathcal{N})$.

\vspace{1ex}
\noindent
\textbf{The prior information} consists of the number of components $n$ and the component alphabets $\Sigma_1,\ldots,\Sigma_n$, as assumed in compositional learning of untimed synchronous systems~\cite{Compositional_DFA_learning}.
Architecture descriptions commonly provide such interface-level information by identifying modules and synchronization actions.
The components' internal locations and transitions remain unavailable.

\vspace{1ex}
\noindent
\textbf{The system-level teacher} holds the underlying target NRTA $\mathcal{N}$ and is able to answer two kinds of queries:

\begin{itemize}[leftmargin=*,topsep=2pt,itemsep=1pt,parsep=1pt]
  \item
  \textbf{Membership query:}
  given a timed word $u$, it returns whether $u \in \mathcal{L}(\mathcal{N})$.

  \item
  \textbf{Equivalence query:}
  given a hypothesis NRTA $\mathcal{N}' = (\mathcal{H}_1,\ldots,\mathcal{H}_n)$, it returns $\top$ if $\mathcal{L}(\mathcal{N}')=\mathcal{L}(\mathcal{N})$, and otherwise returns $(\bot,u_{\mathit{ce}})$, where $u_{\mathit{ce}}\in(\mathcal{L}(\mathcal{N}')\setminus\mathcal{L}(\mathcal{N}))\cup(\mathcal{L}(\mathcal{N})\setminus\mathcal{L}(\mathcal{N}'))$ is a counterexample with minimal $|u_{\mathit{ce}}|$.
\end{itemize}

\noindent
This teacher setting fits integrated timed systems whose components cannot be queried separately, such as cyber-physical systems or software modules exposed only through a system interface.
As in prior active automata learning~\cite{DOTA_learning,RTA_learning,Angluin87}, the system-level teacher considered in this paper is idealized, with access to the target model $\mathcal{N}$ and exact answers to both query types.
All subsequent development assumes this idealized teacher.
In practical applications, membership queries can be controlled tests, whereas equivalence queries can be approximated by PAC-style testing or mutation testing~\cite{shen2020pac,TangSZAZZ22}.

Notably, the learned network $\mathcal{N}'$ need only satisfy $\mathcal{L}(\mathcal{N}')=\mathcal{L}(\mathcal{N})$, rather than $\mathcal{L}(\mathcal{H}_i)=\mathcal{L}(\mathcal{A}_i)$ for every $i$.
This is because some component behaviors cannot be identified from system-level queries alone, as illustrated in Section~\ref{Sec:Algorithm}.

\subsection{Approach Overview}
\label{Sec:ApproachOverview}

\begin{figure}[t]
  \centering
  \input{Figures/architecture_tikz}
  \vspace{-5mm}
  \caption{Learning architecture for compositional learning of $\mathcal{N}$.}
  \vspace{-4mm}
  \label{fig:architecture}
\end{figure}

Figure~\ref{fig:architecture} shows our compositional learning architecture, which consists of a system-level teacher, a query adapter, and $n$ component learners.
Each learner $\mathtt{L}_i$ runs an $\mathtt{L}^*$-style active learning algorithm to learn a hypothesis $\mathcal{H}_i$ for the hidden component $\mathcal{A}_i$ by issuing component-level membership and equivalence queries to the adapter.
The adapter mediates between the learners and the system-level teacher by translating component-level queries into system-level queries and mapping returned counterexamples back to component-level refinements.

To answer a component-level membership query asking whether a timed word $w\in(\NR\times\Sigma_i)^*$ belongs to $\mathcal{L}(\mathcal{A}_i)$, the adapter builds a timed word $u\in(\NR\times\Sigma)^*$ from $w$ and asks the system-level teacher whether $u\in\mathcal{L}(\mathcal{N})$.
If the teacher's answer determines the component-level result, the adapter returns $\top$ or $\bot$; otherwise, it returns $\unk$, meaning unknown with the current information.

Once a learner constructs a hypothesis $\mathcal{H}_i$ from the collected information, it submits a component-level equivalence query to the adapter on whether $\mathcal{L}(\mathcal{H}_i)=\mathcal{L}(\mathcal{A}_i)$.
The adapter does not answer this query immediately.
Instead, it waits until all component hypotheses are available, assembles a hypothesis for the entire system $\mathcal{N}'=(\mathcal{H}_1,\ldots,\mathcal{H}_n)$, and submits a system-level equivalence query to the teacher.
If the teacher returns $\top$, learning terminates.
Otherwise, from the returned system-level counterexample $u_{\mathit{ce}}$, the adapter derives a component-level answer $(w_{\mathit{ce}}^i,\beta)$ for each learner $\mathtt{L}_i$, where $w_{\mathit{ce}}^i\in(\NR\times\Sigma_i)^*$ and $\beta\in\{\top,\bot\}$ indicates whether the learner should include $w_{\mathit{ce}}^i$ in ($\top$) or exclude it from ($\bot$) the language of its next hypothesis.

%% file: Figures/architecture_tikz.tex
\begin{tikzpicture}[
  x=1cm,
  y=1cm,
  node distance=3.05cm,
  >=Stealth,
  font=\normalsize,
  learner/.style={draw, line width=0.65pt, minimum width=1.3cm,
    minimum height=1.5cm, align=center, fill=white},
  adapter/.style={draw, line width=0.65pt, minimum width=2.00cm,
    minimum height=2.0cm, align=center, fill=white},
  teacher/.style={draw, line width=0.65pt, minimum width=2.00cm,
    minimum height=2.0cm, align=center, fill=white},
  flow/.style={->, line width=0.6pt, shorten >=1pt, shorten <=1pt},
  message/.style={font=\normalsize, inner sep=1.2pt, fill=white}
]
  \node[learner] (learner) at (0,0) {Learner\\[3pt]$\mathtt{L}_i$};
  \node[above=1.2pt of learner.north] {$\cdots$};
  \node[below=1.2pt of learner.south] {$\cdots$};
  \node[adapter,right=of learner] (adapter) {Adapter};
  \node[teacher,right=of adapter] (teacher) {Teacher\\[3pt]$\mathcal{N}=$\\$(\mathcal{A}_1,\ldots,\mathcal{A}_n)$};

  \path ([yshift=1.2cm]learner.east) --
    node[midway] (componentlevel) {Component-level} ([yshift=1.2cm]adapter.west);
  \path ([yshift=1.2cm]adapter.east) --
    node[midway] (systemlevel) {System-level} ([yshift=1.2cm]teacher.west);

  \draw[flow] ([yshift=0.57cm]learner.east) -- node[message,above] {$w$} ([yshift=0.57cm]adapter.west);
  \draw[flow] ([yshift=0.45cm]adapter.west) -- node[message,below] {$\top/\bot/\unk$} ([yshift=0.45cm]learner.east);

  \draw[flow] ([yshift=-0.45cm]learner.east) -- node[message,above] {$\mathcal{H}_i$} ([yshift=-0.45cm]adapter.west);
  \draw[flow] ([yshift=-0.57cm]adapter.west) -- node[message,below] {$(w_{\mathit{ce}}^i,\top/\bot)$} ([yshift=-0.57cm]learner.east);

  \draw[flow] ([yshift=0.57cm]adapter.east) -- node[message,above] {$u$} ([yshift=0.57cm]teacher.west);
  \draw[flow] ([yshift=0.45cm]teacher.west) -- node[message,below] {$\top/\bot$} ([yshift=0.45cm]adapter.east);

  \draw[flow] ([yshift=-0.45cm]adapter.east) -- node[message,above] {$\mathcal{N}'$} ([yshift=-0.45cm]teacher.west);
  \draw[flow] ([yshift=-0.57cm]teacher.west) -- node[message,below] {$\top/(\bot,u_{\mathit{ce}})$} ([yshift=-0.57cm]adapter.east);

  \def\levelpadding{0.20cm}
  \draw[orange!85!black, dashed, line width=0.75pt, rounded corners=3pt]
    ($(learner.west |- adapter.south)+(-\levelpadding,-\levelpadding)$)
    rectangle
    ($(adapter.west |- componentlevel.north)+(\levelpadding,0.05cm)$);
  \draw[blue!75!black, dotted, line width=0.9pt, rounded corners=3pt]
    ($(adapter.east |- adapter.south)+(-\levelpadding,-\levelpadding)$)
    rectangle
    ($(teacher.east |- systemlevel.north)+(\levelpadding,0.05cm)$);
\end{tikzpicture}%

%% file: Sections/4-Query_Adapter.tex
\section{Query Adapter}
\label{Sec:Adapter}
For brevity, we write $\mathsf{MQ}(u)$ for the teacher's answer to the system-level membership query on whether $u\in\mathcal{L}(\mathcal{N})$, and $\mathsf{EQ}(\mathcal{N}')$ for its answer to the system-level equivalence query on whether $\mathcal{L}(\mathcal{N}')=\mathcal{L}(\mathcal{N})$.

\subsection{Positive Evidence}
\label{Sec:PositiveEvidence}
We call a timed word $w\in(\NR\times\Sigma_i)^*$ \emph{positive evidence} for component $\mathcal{A}_i$ if the adapter has established that $w\in\mathcal{L}(\mathcal{A}_i)$.

\vspace{-0.5mm}
\begin{lemma}
\label{lemma:partner_infer}
For a timed word $u\in(\NR\times\Sigma)^*$ and a component $\mathcal{A}_i$ of $\mathcal{N}$, if $\project{u}{j}\in\mathcal{L}(\mathcal{A}_j)$ for every $j\neq i$, then
\vspace{-2mm}
\[
  \project{u}{i}\in\mathcal{L}(\mathcal{A}_i)
  \quad\Longleftrightarrow\quad
  u\in\mathcal{L}(\mathcal{N}).
\]
\end{lemma}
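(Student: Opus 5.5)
This is a direct corollary of Lemma~\ref{lemma:proj_sound}. I would apply that lemma to the fixed word $u$ and then use the hypothesis to strip away every conjunct except the one for component $i$.

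\emph{Step 1.} By Lemma~\ref{lemma:proj_sound},
\[
  u\in\mathcal{L}(\mathcal{N}) \iff \forall\,1\le j\le n:\ \project{u}{j}\in\mathcal{L}(\mathcal{A}_j).
\]
I would split the right-hand side into the conjunct for $j=i$ and the conjunction over all $j\neq i$.

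\emph{Step 2.} The hypothesis says that the conjunction over $j\neq i$ holds. The whole right-hand side is therefore equivalent to the single conjunct $\project{u}{i}\in\mathcal{L}(\mathcal{A}_i)$. Chaining the equivalences gives the claim.

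For readability I would still spell out both directions. If $u\in\mathcal{L}(\mathcal{N})$, then the ``$\Rightarrow$'' direction of Lemma~\ref{lemma:proj_sound} gives $\project{u}{i}\in\mathcal{L}(\mathcal{A}_i)$ directly, and this direction does not even need the hypothesis. Conversely, if $\project{u}{i}\in\mathcal{L}(\mathcal{A}_i)$, then together with the assumption for all $j\neq i$ every projection is accepted by its component. The ``$\Leftarrow$'' direction of Lemma~\ref{lemma:proj_sound} then yields $u\in\mathcal{L}(\mathcal{N})$.

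There is essentially no obstacle here, since all the real work (constructing a global run from component runs by synchronizing on shared actions) is already inside Lemma~\ref{lemma:proj_sound}. The only point worth stating is that $i$ ranges over $1,\ldots,n$ and the hypothesis covers exactly the remaining indices, so the case $n=1$ is trivially included. In that case the hypothesis is vacuous and the statement is Lemma~\ref{lemma:proj_sound} itself.
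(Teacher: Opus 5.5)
Your proposal is correct and follows the paper's proof: apply Lemma~\ref{lemma:proj_sound} to $u$ and use the hypothesis to discharge every conjunct except the one for component $i$. Your remark that the direction from $u\in\mathcal{L}(\mathcal{N})$ to $\project{u}{i}\in\mathcal{L}(\mathcal{A}_i)$ does not need the hypothesis is also accurate.
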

\begin{proof}
By Lemma~\ref{lemma:proj_sound}, $u\in\mathcal{L}(\mathcal{N})$ iff all its component projections belong to their respective component languages.
The premise of Lemma~\ref{lemma:partner_infer} already establishes projection membership for every $j\neq i$, leaving only $\project{u}{i}\in\mathcal{L}(\mathcal{A}_i)$, which proves the stated equivalence.\qed
\end{proof}

Based on Lemma~\ref{lemma:partner_infer}, the adapter incrementally accumulates positive evidence for each component to help derive answers to component-level queries from the system-level teacher's responses.
Because time is continuous, the adapter generalizes positive evidence beyond isolated timed words using \emph{RTA region words}.

\vspace{-0.5mm}
\begin{definition}[RTA Region Word]
\label{def:rta_region_word}
Let $w=(d_1,\sigma_1)\cdots(d_m,\sigma_m)$ be a timed word.
Its \emph{RTA region word} is $\rabs{w}=(r_1,\sigma_1)\cdots(r_m,\sigma_m)$, where, for each $1\le j\le m$, $r_j=[d_j,d_j]$ if $d_j\in\mathbb{N}$ and $r_j=(\lfloor d_j\rfloor,\lceil d_j\rceil)$ otherwise\footnote{Unlike standard regions, which abstract clock valuations~\cite{AlurD94}, these intervals abstract delays and can be viewed as a variant of region abstraction tailored to our setting.}.
\end{definition}
\noindent
For instance, if $w=(0,a)(2.4,b)(5,c)$, then $\rabs{w}=([0,0],a)((2,3),b)([5,5],c)$.
Each pair $(r_j,\sigma_j)$ in $\rabs{w}$ is called an \emph{RTA region action}.

\begin{lemma}
\label{lemma:rta_region_equal}
  For any timed words $w,w'\in(\NR\times\Sigma_i)^*$, if $\rabs{w}=\rabs{w'}$, then
  \vspace{-2mm}
  \[
    w\in\mathcal{L}(\mathcal{A}_i)
    \quad\Longleftrightarrow\quad
    w'\in\mathcal{L}(\mathcal{A}_i).
  \]
\end{lemma}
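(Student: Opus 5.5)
By the symmetry of the hypothesis $\rabs{w}=\rabs{w'}$, it suffices to prove one direction: if $w\in\mathcal{L}(\mathcal{A}_i)$ then $w'\in\mathcal{L}(\mathcal{A}_i)$. The plan is to take a run of $\mathcal{A}_i$ on $w$ and show that the same sequence of transitions and locations is also a run on $w'$.

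First, I would record what equality of RTA region words gives. Write $w=(d_1,\sigma_1)\cdots(d_m,\sigma_m)$. Since region words have the same length and action labels as the underlying timed words, $\rabs{w}=\rabs{w'}$ forces $w'=(d'_1,\sigma_1)\cdots(d'_m,\sigma_m)$ with the same actions. Moreover, for each $j$, $d_j$ and $d'_j$ fall in the same interval $r_j$. By Definition~\ref{def:rta_region_word}, this means that either $d_j=d'_j\in\mathbb{N}$, or both are non-integers with $\lfloor d_j\rfloor=\lfloor d'_j\rfloor$ (equivalently $\lceil d_j\rceil=\lceil d'_j\rceil$).

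Second comes the key observation, which is that every guard sees the same value on both words. In an RTA run, the clock is reset after every transition, so before the $j$-th action the clock value is exactly $v_{j-1}(x)+d_j=0+d_j=d_j$ on $w$, and likewise $d'_j$ on $w'$. It therefore suffices to show that every guard $\phi\in\Phi(\{x\})$ satisfies $d_j\models\phi$ iff $d'_j\models\phi$. Since guards are conjunctions of atomic constraints, it is enough to check the atoms $x\le\kappa$, $x<\kappa$, $\kappa\le x$ and $\kappa<x$ with $\kappa\in\mathbb{N}$. If $d_j=d'_j$ this is trivial. Otherwise both values lie in the open interval $(k,k+1)$ with $k\in\mathbb{N}$, and no integer $\kappa$ lies strictly inside that interval. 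Hence for every $\kappa$, either $\kappa\le k$, in which case both values satisfy $\kappa<x$ and $\kappa\le x$ and violate the other two atoms, or $\kappa\ge k+1$, in which case the reverse holds.

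Finally, given the run $(q_0,v_0)\xrightarrow{d_1,\sigma_1}\cdots(q_m,v_m)$ on $w$ with witnessing transitions $(q_{j-1},\sigma_j,\phi_j,q_j)\in\Delta_i$, I would replace each $d_j$ by $d'_j$ and keep the same transitions. By induction on $j$, each $v_j(x)=0$ and each guard is still satisfied by the previous step, so the result is a run on $w'$. Hence $w'\in\mathcal{L}(\mathcal{A}_i)$.

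I expect no real obstacle here. The only step needing care is the second one: one must use the per-transition reset of the single clock so that clock values equal individual delays rather than cumulative sums, and one must confirm that integer constants cannot separate two points of the same open unit interval. Determinism of $\mathcal{A}_i$ is not needed.
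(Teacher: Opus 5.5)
Your proposal is correct and follows essentially the same route as the paper's proof. Both arguments show that delays in the same RTA delay region satisfy exactly the same atomic constraints, and hence the same guards, then reuse the run's transition sequence on $w'$ and obtain the converse by symmetry; your case split $\kappa\le k$ versus $\kappa\ge k+1$ and your remark that the per-transition reset makes each guard see exactly $d_j$ spell out details the paper leaves implicit.
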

\noindent
This follows because RTA transition guards use only integer constants and the single clock is reset after every transition.
The proof is detailed in Appendix~\ref{app:proof-region-equality}.
Thus, once a timed word $w$ is confirmed as positive evidence for component $\mathcal{A}_i$, any timed word $w'$ with $\rabs{w'}=\rabs{w}$ is also positive evidence for $\mathcal{A}_i$.

For each component $\mathcal{A}_i$, the adapter maintains a \emph{positive-evidence RTA} $\mathcal{P}_i=(Q_i^+,q_0^{i,+},\Sigma_i,x_i,\Delta_i^+)$ such that every $w\in\mathcal{L}(\mathcal{P}_i)$ is positive evidence for $\mathcal{A}_i$.
Initially, $\mathcal{P}_i$ contains only the initial location $q_0^{i,+}$ and no transitions, i.e., $Q_i^+=\{q_0^{i,+}\}$, $\Delta_i^+=\emptyset$ and $\mathcal{L}(\mathcal{P}_i) = \{\epsilon\}$.
When a timed word $w$ is confirmed as positive evidence for $\mathcal{A}_i$, the adapter updates $\mathcal{P}_i$ according to Algorithm~\ref{alg:UpdatePositiveEvidence}.
The algorithm starts from the initial location of $\mathcal{P}_i$ and scans the timed actions of $w$ (Lines~\ref{lines:addpe_init}--\ref{lines:addpe_loop}).
At a location $q$, if there exists a transition $(q,\sigma_j,\phi,q')\in\Delta_i^+$ such that the clock valuation $\{x_i\mapsto d_j\}$ satisfies the guard $\phi$, the algorithm follows it and moves to $q'$ (Lines~\ref{lines:addpe_reuse}--\ref{lines:addpe_follow}).
If no such transition exists, it creates a fresh successor and inserts $(q,\sigma_j,\mathsf{guard}(x_i,d_j),q')$ (Lines~\ref{lines:addpe_fresh}--\ref{lines:addpe_insert}), where $\mathsf{guard}(x_i,d_j)$ is $d_j\le x_i\wedge x_i\le d_j$ if $d_j\in\mathbb{N}$, and $\lfloor d_j\rfloor<x_i\wedge x_i<\lceil d_j\rceil$ otherwise.
By construction, the followed or inserted transition sequence corresponds to the RTA region word $\rabs{w}$.
Hence, after the update, every timed word $w'$ with $\rabs{w'}=\rabs{w}$ belongs to $\mathcal{L}(\mathcal{P}_i)$.
This update procedure is denoted by $\mathsf{addPE}_i(w)$.
Since $\mathsf{addPE}_i$ is the only procedure that updates $\mathcal{P}_i$, its construction and Lemma~\ref{lemma:rta_region_equal} ensure that the language inclusion $\mathcal{L}(\mathcal{P}_i)\subseteq\mathcal{L}(\mathcal{A}_i)$ is preserved throughout learning.

\input{Algorithms/UpdatePositiveEvidence}

\subsection{Membership Query}
\label{Sec:MQPipeline}

Given a component-level membership query from $\mathtt{L}_i$ on whether a timed word $w\in(\NR\times\Sigma_i)^*$ belongs to $\mathcal{L}(\mathcal{A}_i)$, the adapter answers it in three stages.

\smallskip
\noindent\textbf{Stage~1.}
The adapter first directly views $w$ as a system-level timed word over $\Sigma$ and submits it to the teacher.
By Lemma~\ref{lemma:proj_sound}, if $\mathsf{MQ}(w)=\top$, then $\project{w}{j}\in\mathcal{L}(\mathcal{A}_j)$ for every $1\le j\le n$.
In this case, the adapter calls $\mathsf{addPE}_j(\project{w}{j})$ for each $j$ and returns $\top$ to $\mathtt{L}_i$.
Conversely, if $\mathsf{MQ}(w)=\bot$, then there must be some component $\mathcal{A}_j$ with $\project{w}{j}\notin\mathcal{L}(\mathcal{A}_j)$.
In this case, the adapter proceeds to Stage~2.

\smallskip
\noindent\textbf{Stage~2.}
The adapter checks whether $\project{w}{j}\in\mathcal{L}(\mathcal{P}_j)$ for every $j\neq i$.
If so, the language inclusion $\mathcal{L}(\mathcal{P}_j)\subseteq\mathcal{L}(\mathcal{A}_j)$ implies that $\project{w}{j}\in\mathcal{L}(\mathcal{A}_j)$ for every $j\neq i$.
Since $\project{w}{i}=w$, Lemma~\ref{lemma:partner_infer} implies that $w\notin\mathcal{L}(\mathcal{A}_i)$, and the adapter returns $\bot$.
If some index $j\neq i$ satisfies $\project{w}{j}\notin\mathcal{L}(\mathcal{P}_j)$, the adapter cannot yet determine whether $w\in\mathcal{L}(\mathcal{A}_i)$ and proceeds to Stage~3 for further analysis.

\smallskip
\noindent\textbf{Stage~3.}
The adapter attempts to construct a system-level timed word $u$ such that $\rabs{\project{u}{i}}=\rabs{w}$ and $\project{u}{j}\in\mathcal{L}(\mathcal{P}_j)$ for every $j\neq i$.
Given $w=(d_1,\sigma_1)\cdots(d_m,\sigma_m)$, the adapter first builds an RTA $\mathcal{W}=(Q^w,q_0^w,\Sigma_i,x_i,\Delta^w)$, where $Q^w$ consists of $m+1$ distinct locations $q_0^w,\ldots,q_m^w$, and $\Delta^w=\{(q_{j-1}^w,\sigma_j,\mathsf{guard}(x_i,d_j),q_j^w)\mid 1\le j\le m\}$.
Thus, by construction, for any timed word $w'\in(\NR\times\Sigma_i)^*$ with $|w'|=|w|$, $w'\in\mathcal{L}(\mathcal{W})$ iff $\rabs{w'}=\rabs{w}$.
Then, the adapter constructs an NRTA
\begingroup
\setlength{\abovedisplayskip}{1.7mm}
\setlength{\abovedisplayshortskip}{1.7mm}
\setlength{\belowdisplayskip}{2mm}
\setlength{\belowdisplayshortskip}{2mm}
\[
  \mathcal{N}_i^w
  =
  (\mathcal{P}_1,\ldots,\mathcal{P}_{i-1},\mathcal{W},\mathcal{P}_{i+1},\ldots,\mathcal{P}_n)
\]
\endgroup
and checks whether $\mathcal{N}_i^w$ has a run reaching a location vector whose $i$-th entry is $q_m^w$.
If such a run is found, it induces the required timed word $u$.
The adapter then issues a system-level membership query to the teacher on whether $u\in\mathcal{L}(\mathcal{N})$, obtaining the answer $\mathsf{MQ}(u)$.
Since $\mathcal{L}(\mathcal{P}_j)\subseteq\mathcal{L}(\mathcal{A}_j)$ gives $\project{u}{j}\in\mathcal{L}(\mathcal{A}_j)$ for every $j\neq i$, Lemmas~\ref{lemma:partner_infer} and~\ref{lemma:rta_region_equal} imply that $\mathsf{MQ}(u)$ is also the answer to the original component-level membership query on whether $w\in \mathcal{L}(\mathcal{A}_i)$.
The adapter returns $\mathsf{MQ}(u)$ to $\mathtt{L}_i$ and calls $\mathsf{addPE}_i(w)$ when $\mathsf{MQ}(u)=\top$.
If no such run exists, the adapter cannot yet determine whether $w\in\mathcal{L}(\mathcal{A}_i)$ and returns $\unk$.

\subsection{Equivalence Query}
\label{Sec:CEClassification}
Once learner $\mathtt{L}_i$ constructs a hypothesis RTA $\mathcal{H}_i$, it submits a component-level equivalence query to the adapter, asking whether $\mathcal{L}(\mathcal{H}_i)=\mathcal{L}(\mathcal{A}_i)$.
The adapter waits until every learner has issued an equivalence query, collects their hypotheses, assembles $\mathcal{N}'=(\mathcal{H}_1,\ldots,\mathcal{H}_n)$, and asks the teacher whether $\mathcal{L}(\mathcal{N}')=\mathcal{L}(\mathcal{N})$.
If $\mathsf{EQ}(\mathcal{N}')=\top$, learning terminates.
Otherwise, the adapter receives a counterexample $u_{\mathit{ce}}$.
For each component $\mathcal{A}_i$, write $w_{\mathit{ce}}^i=\project{u_{\mathit{ce}}}{i}$.
The adapter returns component-level refinements by considering the following two cases.

\smallskip
\noindent\textbf{Case 1: $u_{\mathit{ce}}\in\mathcal{L}(\mathcal{N})\setminus\mathcal{L}(\mathcal{N}')$.}
In this case, $u_{\mathit{ce}}$ is a \emph{positive system-level counterexample}.
By Lemma~\ref{lemma:proj_sound}, $w_{\mathit{ce}}^i\in\mathcal{L}(\mathcal{A}_i)$ for every component $i$.
Therefore, the adapter calls $\mathsf{addPE}_i(w_{\mathit{ce}}^i)$ and returns $(w_{\mathit{ce}}^i,\top)$ to learner $\mathtt{L}_i$, indicating that the learner should ensure $w_{\mathit{ce}}^i\in\mathcal{L}(\mathcal{H}_i)$ in its next hypothesis.

\smallskip
\noindent\textbf{Case 2: $u_{\mathit{ce}}\in\mathcal{L}(\mathcal{N}')\setminus\mathcal{L}(\mathcal{N})$.}
In this case, $u_{\mathit{ce}}$ is a \emph{negative system-level counterexample}.
Let $p$ be the timed word obtained from $u_{\mathit{ce}}=(d_1,\sigma_1)\cdots(d_m,\sigma_m)$ by removing its last timed action.
Since $\mathcal{L}(\mathcal{N}')$ is prefix-closed, $p\in\mathcal{L}(\mathcal{N}')$.
Because the teacher returns a shortest counterexample, $p$ cannot be a shorter system-level counterexample, and hence $p\in\mathcal{L}(\mathcal{N})$.
By Lemma~\ref{lemma:proj_sound}, $\project{p}{i}\in\mathcal{L}(\mathcal{A}_i)$ for every component $i$, so the adapter calls $\mathsf{addPE}_i(\project{p}{i})$.
The adapter then returns a component-level refinement to each learner $\mathtt{L}_i$ according to the following rules.
\begin{enumerate}[topsep=3pt,itemsep=1pt,parsep=1pt]
  \item 
  If $\sigma_m\notin\Sigma_i$, then $w_{\mathit{ce}}^i=\project{p}{i}\in\mathcal{L}(\mathcal{A}_i)$, and the adapter returns $(w_{\mathit{ce}}^i,\top)$;
  \item 
  If $\sigma_m\in\Sigma_i$ and $w_{\mathit{ce}}^i\in\mathcal{L}(\mathcal{P}_i)$, the adapter also returns $(w_{\mathit{ce}}^i,\top)$;
  \item 
  Otherwise, the adapter returns $(w_{\mathit{ce}}^i,\bot)$.
\end{enumerate}
By Lemma~\ref{lemma:proj_sound} and the language inclusion $\mathcal{L}(\mathcal{P}_i)\subseteq\mathcal{L}(\mathcal{A}_i)$, at least one word returned by the third rule is indeed outside the corresponding component language.
Notably, when the third rule applies to several learners, a returned component-level counterexample may be \emph{spurious}.
For some such learner $\mathtt{L}_i$, $w_{\mathit{ce}}^i$ may in fact belong to $\mathcal{L}(\mathcal{A}_i)$, although the collected evidence is insufficient for the adapter to infer this.
Such spurious counterexamples may mislead component learners, and Section~\ref{Sec:LearnerCounterexample} describes how they are handled.

%% file: Algorithms/UpdatePositiveEvidence.tex
\begin{compactalgorithm}[t]{-5mm}
  \caption{Updating $\mathcal{P}_i$ with positive evidence}
  \label{alg:UpdatePositiveEvidence}
  \SetKwInOut{Input}{input}
  \Input{the RTA $\mathcal{P}_i=(Q_i^+,q_0^{i,+},\Sigma_i,x_i,\Delta_i^+)$; \\
	  positive evidence $w=(d_1,\sigma_1)\cdots(d_m,\sigma_m)$ for $\mathcal{A}_i$.}
  $q \leftarrow q_0^{i,+}$\;
  \label{lines:addpe_init}
  \For{$j \gets 1$ \KwTo $m$}{
    \label{lines:addpe_loop}
    \eIf{$\exists (q,\sigma_j,\phi,q')\in\Delta_i^+ : \{x_i\mapsto d_j\}\models \phi$}{
      \label{lines:addpe_reuse}
      $q \leftarrow q'$\;
      \label{lines:addpe_follow}
    }{
      introduce a fresh location $q'$\;
      \label{lines:addpe_fresh}
      $Q_i^+ \leftarrow Q_i^+\cup\{q'\}$\;
      \label{lines:addpe_addloc}
      $\Delta_i^+ \leftarrow \Delta_i^+\cup\{(q,\sigma_j,\mathsf{guard}(x_i, d_j),q')\}$\;
      \label{lines:addpe_insert}
      $q \leftarrow q'$\;
      \label{lines:addpe_movefresh}
    }
  }
\end{compactalgorithm}

%% file: Sections/5-Component_Learner.tex
\section{Component Learner}
\label{Sec:Learner}

Each component learner $\mathtt{L}_i$ runs an $\mathtt{L}^*$-style learning algorithm using a \emph{timed observation table with unknown entries} (TOTU) to organize observed component behaviors.
To describe TOTU operations, we use $\cdot$ to denote the concatenation of RTA region words.
For learner $\mathtt{L}_i$, we write $\mathsf{MQ}_i(w)$ for the adapter's answer to its component-level membership query on timed word $w$.

\subsection{Timed Observation Table with Unknown Entries}
\label{Sec:Table}

\begin{definition}[TOTU]
\label{def:totu}
A \emph{timed observation table with unknown entries} for component $\mathcal{A}_i$ is a tuple $\mathbf{T} = (\Sigma_i, \bm{S}, \bm{R}, \bm{E}, f)$.
Its components are as follows.
\begin{itemize}[topsep=2pt,itemsep=1pt,parsep=1pt]
  \item $\Sigma_i$ is the alphabet of $\mathcal{A}_i$;
  \item $\bm{S}$, $\bm{R}$, and $\bm{E}$ are finite sets of RTA region words over $\Sigma_i$, called the representative, boundary, and suffix sets, respectively;
  \item $f : (\bm{S} \cup \bm{R}) \times \bm{E} \to \{\top, \bot, \unk\}$ is a three-valued classification function, where $f(\pi,e)$ records whether a timed word $w$ with $\rabs{w}=\pi\cdot e$ is known to belong to $\mathcal{L}(\mathcal{A}_i)$ ($\top$), known to be outside it ($\bot$), or not yet determined ($\unk$).
\end{itemize}
\end{definition}

Each pair $(\pi,e)\in(\bm{S}\cup\bm{R})\times\bm{E}$ indexes one table entry, whose value is $f(\pi,e)$.
Each RTA region word $\pi \in \bm{S} \cup \bm{R}$ indexes one row, whose value is $\mathit{row}(\pi) = (f(\pi, e))_{e \in \bm{E}}$.
Since some entries may have value $\unk$, we define the following \emph{potential equality} relation between rows.

\begin{definition}[Potential Equality]
\label{def:weak_eq}
Given a TOTU $\mathbf{T} = (\Sigma_i, \bm{S}, \bm{R}, \bm{E}, f)$, the rows indexed by $\pi,\pi'\in\bm{S}\cup\bm{R}$ are \emph{potentially equal}, denoted by $\mathit{row}(\pi) \approx \mathit{row}(\pi')$, if, for every $e \in \bm{E}$, either $f(\pi,e)=f(\pi',e)$, $f(\pi,e)=\unk$, or $f(\pi',e)=\unk$.
\end{definition}

\noindent
The value $\unk$ acts as a wildcard that may later match either $\top$ or $\bot$.
If $\mathit{row}(\pi) \not\approx \mathit{row}(\pi')$, then the two indexed rows are already distinguishable in the current table.
The learner maintains the following four standard table properties.
\begin{itemize}[topsep=2pt,itemsep=1pt,parsep=1pt]
  \item \emph{Disjoint:} $\bm{S}\cap\bm{R}=\emptyset$;
  \item \emph{Reduced:} $\forall \pi,\pi'\in\bm{S}$: $\pi\neq\pi' \Rightarrow \mathit{row}(\pi)\not\approx\mathit{row}(\pi')$;
  \item \emph{Closed:} $\forall \pi\in\bm{R}$, $\exists \pi'\in\bm{S}$: $\mathit{row}(\pi)\approx\mathit{row}(\pi')$;
  \item \emph{Consistent:} $\forall \pi,\pi'\in\bm{S}\cup\bm{R}$: if $\mathit{row}(\pi)\approx\mathit{row}(\pi')$ and there is an RTA region action $(r,\sigma)$ such that $\pi\cdot(r,\sigma),\pi'\cdot(r,\sigma)\in\bm{S}\cup\bm{R}$, then $\mathit{row}(\pi\cdot(r,\sigma))\approx\mathit{row}(\pi'\cdot(r,\sigma))$.
\end{itemize}
Intuitively, rows indexed by words in $\bm{S}$ represent potential hypothesis locations, while $\bm{R}$ stores additional words used to discover locations and transitions.
Disjointness keeps $\bm{S}$ and $\bm{R}$ separate, reducedness distinguishes rows indexed by $\bm{S}$, closedness matches each $\bm{R}$-indexed row to an $\bm{S}$-indexed row, and consistency preserves potential equality under one-step extensions.
A table is \emph{prepared} when it satisfies all four properties.

Learner $\mathtt{L}_i$ initializes $\mathbf{T}$ with $\bm{S}=\bm{E}=\{\epsilon\}$ and $\bm{R}=\{([0,0],\sigma)\mid\sigma\in\Sigma_i\}$.
The initial classification satisfies $f(\epsilon,\epsilon)=\top$, while every other entry has value $\unk$.
Then $\mathtt{L}_i$ calls $\mathsf{Fill}(\mathbf{T})$, which processes each $(\pi,e)$ with $f(\pi,e)=\unk$ through up to three stages, stopping once a stage assigns $f(\pi,e)$.

\smallskip
\noindent\textbf{Stage~1.}
Infer $f(\pi,e)$ using the prefix-closure of $\mathcal{L}(\mathcal{A}_i)$.
If there is some $(\pi',e')$ with $f(\pi',e')=\bot$ and $\pi'\cdot e'$ is a prefix of $\pi\cdot e$, set $f(\pi,e)=\bot$.
If there is some $(\pi',e')$ with $f(\pi',e')=\top$ and $\pi\cdot e$ is a prefix of $\pi'\cdot e'$, set $f(\pi,e)=\top$.

\smallskip
\noindent\textbf{Stage~2.}
Choose any timed word $w$ with $\rabs{w}=\pi\cdot e$\footnote{Such a timed word can be obtained by choosing one delay from each interval in the RTA region word, e.g., $3$ from $[3,3]$ and $2.5$ from $(2,3)$.} and check whether $w\in\mathcal{L}(\mathcal{P}_i)$.
If so, set $f(\pi,e)=\top$, since $\mathcal{L}(\mathcal{P}_i)\subseteq \mathcal{L}(\mathcal{A}_i)$.

\smallskip
\noindent\textbf{Stage~3.}
Issue a component-level membership query to the adapter on whether $w\in\mathcal{L}(\mathcal{A}_i)$, and set $f(\pi,e)=\mathsf{MQ}_i(w)$.

\smallskip
The disjoint and reduced properties are preserved by all TOTU operations introduced later in this paper.
If $\mathbf{T}$ is not closed or not consistent, the learner $\mathtt{L}_i$ adjusts it by the following two operations, respectively.

\smallskip
\noindent\textbf{Making $\mathbf{T}$ closed.}
If there exists $\pi\in\bm{R}$ such that $\mathit{row}(\pi)\not\approx\mathit{row}(\pi')$ for all $\pi'\in\bm{S}$, $\mathtt{L}_i$ moves $\pi$ from $\bm{R}$ to $\bm{S}$.
For each $\sigma\in\Sigma_i$, let $\pi''=\pi\cdot([0,0],\sigma)$.
If $\pi''\notin\bm{S}\cup\bm{R}$, $\mathtt{L}_i$ adds $\pi''$ to $\bm{R}$ and sets $f(\pi'',e)=\unk$ for all $e\in\bm{E}$.

\smallskip
\noindent\textbf{Making $\mathbf{T}$ consistent.}
In this case, there exist $\pi,\pi'\in\bm{S}\cup\bm{R}$ with $\mathit{row}(\pi)\approx\mathit{row}(\pi')$ and an RTA region action $(r,\sigma)$ such that $\pi\cdot(r,\sigma),\pi'\cdot(r,\sigma)\in\bm{S}\cup\bm{R}$.
For some $e\in\bm{E}$, the values $f(\pi\cdot(r,\sigma),e)$ and $f(\pi'\cdot(r,\sigma),e)$ are both in $\{\top,\bot\}$ and are different.
Then $\mathtt{L}_i$ adds the suffix $e'=(r,\sigma)\cdot e$ to $\bm{E}$ and sets $f(\pi'',e')=\unk$ for all $\pi''\in\bm{S}\cup\bm{R}$.

\smallskip
\noindent
After either operation, $\mathtt{L}_i$ calls $\mathsf{Fill}(\mathbf{T})$.
Enforcing closedness may break consistency, and adding a distinguishing suffix for consistency may break closedness.
Thus, the learner repeats these operations until the table is prepared.

\subsection{Hypothesis Construction}
To construct a hypothesis from a prepared TOTU, $\mathtt{L}_i$ first uses an SMT solver to seek a concrete value in $\{\top,\bot\}$ for each entry whose value is $\unk$.
$\mathtt{L}_i$ defines a symbolic classification function $\mathcal{F}$ over table entries.
For each entry $(\pi,e)$, $\mathcal{F}(\pi,e)$ is a fresh symbolic Boolean if $f(\pi,e)=\unk$, and $\mathcal{F}(\pi,e)=f(\pi,e)$ otherwise.
Using $\mathcal{F}$, row equality is encoded as
\[
  \Psi_{\mathsf{row}}(\pi, \pi') \triangleq \bigwedge_{e \in \bm{E}} \bigl(\mathcal{F}(\pi, e) = \mathcal{F}(\pi', e)\bigr).
\]

\noindent
On this basis, the closedness and consistency of the table are encoded by
\vspace{-0.5mm}
\begin{align*}
  \Psi_{\mathsf{cls}} &\triangleq \bigwedge_{\pi \in \bm{R}}\;
    \bigvee_{\pi' \in \bm{S}} \Psi_{\mathsf{row}}(\pi,\, \pi'), \\[1pt]
  \Psi_{\mathsf{cst}} &\triangleq
    \bigwedge_{
    \pi\cdot(r,\sigma),\,\pi'\cdot(r,\sigma)\in\bm{S}\cup\bm{R}}
    \bigl(\Psi_{\mathsf{row}}(\pi,\pi')
    \Rightarrow
    \Psi_{\mathsf{row}}(\pi\cdot(r,\sigma),\,\pi'\cdot(r,\sigma))\bigr).
\end{align*}
\vspace{-2mm}

\noindent
The prefix-closure constraint of $\mathcal{L}(\mathcal{A}_i)$ is encoded by
\[
  \Psi_{\mathsf{pre}} \triangleq
    \bigwedge_{\substack{
      (\pi,e),(\pi',e')\in(\bm{S}\cup\bm{R})\times\bm{E}\\
      \pi\cdot e\text{ is a prefix of }\pi'\cdot e'}}
    \bigl(\mathcal{F}(\pi',e')
    \Rightarrow \mathcal{F}(\pi,e) \bigr).
\]

\noindent
In addition, entries inducing the same RTA region word are constrained to have the same value, as encoded by
\vspace{-0.5mm}
\[
  \Psi_{\mathsf{word}} \triangleq
  \bigwedge_{\substack{
  (\pi,e),(\pi',e')\in(\bm{S}\cup\bm{R})\times\bm{E}\\
  \pi\cdot e\;=\;\pi'\cdot e'}}
  \bigl(\mathcal{F}(\pi,e)=\mathcal{F}(\pi',e')\bigr).
\]

\noindent
A TOTU $\mathbf{T}$ is \emph{satisfiable} if an SMT solver finds the conjunction $\Psi_{\mathsf{cls}} \wedge \Psi_{\mathsf{cst}} \wedge \Psi_{\mathsf{pre}}\wedge \Psi_{\mathsf{word}}$ satisfiable.
In this case, a satisfying model induces a new classification function $f' : (\bm{S}\cup\bm{R})\times\bm{E}\to\{\top,\bot\}$.
For each RTA region word $\pi\in\bm{S}\cup\bm{R}$, let $\mathit{row}'(\pi)=(f'(\pi,e))_{e\in\bm{E}}$.
The hypothesis $\mathcal{H}_i$ is then constructed as follows.

\smallskip
\noindent\textbf{Stage 1 (location generation).}
For each $\pi\in\bm{S}$, $\mathtt{L}_i$ introduces a location $q_{\pi}$, with $q_{\epsilon}$ as the initial location.

\smallskip
\noindent\textbf{Stage 2 (transition generation).}
For each $\pi\in\bm{S}\cup\bm{R}$ and RTA region action $(r,\sigma)$ with $\pi\cdot(r,\sigma)\in\bm{S}\cup\bm{R}$, $\mathtt{L}_i$ finds $\pi',\pi''\in\bm{S}$ such that $\mathit{row}'(\pi')=\mathit{row}'(\pi)$ and $\mathit{row}'(\pi'')=\mathit{row}'(\pi\cdot(r,\sigma))$.
The learner then introduces a transition $(q_{\pi'},\sigma,\mathsf{guard}(x_i,d),q_{\pi''})$, where $d\in r$ is chosen arbitrarily.

\smallskip
\noindent\textbf{Stage 3 (guard arrangement).}
Given a transition guard $\phi=\mathsf{guard}(x_i,d)$, let $\phi^\ell$ be its lower-bound constraint, i.e., a constraint of the form $\kappa\le x_i$ or $\kappa<x_i$.
For each location $q_{\pi}$ and action $\sigma\in\Sigma_i$, the learner selects all transitions whose source is $q_{\pi}$ and action is $\sigma$, and then orders them as $\delta_1,\ldots,\delta_m$, where $\delta_j=(q_{\pi},\sigma,\phi_j,q_j)$ and $\phi_{j+1}^\ell\Rightarrow\phi_j^\ell$ for all $1\le j<m$.
Learner $\mathtt{L}_i$ replaces the guard of each $\delta_j$ by $\phi_j^\ell\wedge\neg\phi_{j+1}^\ell$\footnote{Here, $\neg(\kappa\le x_i)$ and $\neg(\kappa<x_i)$ are rewritten as $x_i<\kappa$ and $x_i\le\kappa$, respectively.} for $j<m$ and the guard of $\delta_m$ by $\phi_m^\ell$.

\smallskip
\noindent\textbf{Stage 4 (non-executable-location pruning).}
The value $f'(\pi,\epsilon)=\bot$ means that the learner currently treats every timed word $w$ with $\rabs{w}=\pi$ as outside $\mathcal{L}(\mathcal{A}_i)$.
Accordingly, the learner removes every location $q_{\pi}$ with $f'(\pi,\epsilon)=\bot$, together with all transitions whose source or target location is $q_{\pi}$, so that such words and their extensions are excluded from $\mathcal{L}(\mathcal{H}_i)$.

\smallskip
Notably, a prepared TOTU may still be unsatisfiable because potential row equality is nontransitive.
For example, there may exist $\pi\in\bm{S}$ and $\pi',\pi''\in\bm{R}$ with $\mathit{row}(\pi)\approx\mathit{row}(\pi')$ and $\mathit{row}(\pi)\approx\mathit{row}(\pi'')$, yet no classification function $f'$ satisfies both $\Psi_{\mathsf{row}}(\pi,\pi')$ and $\Psi_{\mathsf{row}}(\pi,\pi'')$.
In this case, $\mathtt{L}_i$ applies a fallback strategy that sets $\mathcal{H}_i=\mathcal{P}_i$, keeping the learning process moving.
As learning proceeds, newly collected information may make the table satisfiable again.

\subsection{Counterexample-Guided Refinement}
\label{Sec:LearnerCounterexample}

After issuing a component-level equivalence query, learner $\mathtt{L}_i$ receives $(w_{\mathit{ce}}^i,\beta)$ if the algorithm has not terminated.
Here $w_{\mathit{ce}}^i$ is a component-level counterexample and $\beta\in\{\top,\bot\}$.
Let $\pi_{\mathit{ce}}^i=\rabs{w_{\mathit{ce}}^i}$.
The learner refines $\mathbf{T}$ according to $\beta$.
To handle possible spurious counterexamples (cf. Section~\ref{Sec:CEClassification}), each learner $\mathtt{L}_i$ maintains a backtracking stack $\mathsf{BT}_i$, which is initially empty.
\begin{itemize}[topsep=3pt,itemsep=2pt,parsep=1pt]
  \item If $\beta=\bot$, the learner first pushes the pair $(\pi_{\mathit{ce}}^i,\mathbf{T}^{\mathit{snap}})$ onto $\mathsf{BT}_i$, where $\mathbf{T}^{\mathit{snap}}$ is a snapshot of $\mathbf{T}$.
		For each prefix $\pi'$ of $\pi_{\mathit{ce}}^i$ not in $\bm{S}\cup\bm{R}$, the learner adds $\pi'$ to $\bm{R}$ and sets $f(\pi',e)=\unk$ for all $e\in\bm{E}$.
		It then sets $f(\pi_{\mathit{ce}}^i,\epsilon)=\bot$.

  \item If $\beta=\top$, the learner first checks whether $\mathsf{BT}_i$ contains some pair $(\pi,\mathbf{T}^{\mathit{snap}})$ such that $\pi$ is a prefix of $\pi_{\mathit{ce}}^i$.
	  If so, $\mathbf{T}$ was previously updated based on a spurious counterexample.
	  The learner therefore chooses such a pair with the shortest $\pi$, rolls $\mathbf{T}$ back to the state recorded by $\mathbf{T}^{\mathit{snap}}$, and pops the selected pair and all pairs above it from $\mathsf{BT}_i$.
		  After this check, for each prefix $\pi'$ of $\pi_{\mathit{ce}}^i$ not in $\bm{S}\cup\bm{R}$, the learner adds $\pi'$ to $\bm{R}$ and sets $f(\pi',e)=\unk$ for all $e\in\bm{E}$.
		  It then sets $f(\pi_{\mathit{ce}}^i,\epsilon)=\top$.
\end{itemize}

%% file: Sections/6-Algorithm.tex
\section{Compositional Learning Algorithm}
\label{Sec:Algorithm}

\input{Algorithms/MainLoop}

Based on the preceding components, Algorithm~\ref{alg:Main} gives the overall compositional learning algorithm.
It interacts with the system-level teacher for the target network $\mathcal{N}=(\mathcal{A}_1,\ldots,\mathcal{A}_n)$ and takes the number $n$ of components and the component alphabets $\Sigma_1,\ldots,\Sigma_n$ as input.
When it terminates, the timed language of the returned network $\mathcal{N}'=(\mathcal{H}_1,\ldots,\mathcal{H}_n)$ is equivalent to that of $\mathcal{N}$.
At initialization, each learner $\mathtt{L}_i$ creates its initial TOTU $\mathbf{T}_i$, and the adapter initializes the positive-evidence RTA $\mathcal{P}_i$ for component $\mathcal{A}_i$ (Lines~\ref{lines:main_init}--\ref{lines:main_pe_init}).
The algorithm then proceeds in rounds.
In each round, every learner first calls $\mathsf{Fill}(\mathbf{T}_i)$, makes $\mathbf{T}_i$ prepared, and constructs $\mathcal{H}_i$ from $\mathbf{T}_i$ if it is satisfiable; otherwise, it sets $\mathcal{H}_i=\mathcal{P}_i$ (Lines~\ref{lines:main_fill}--\ref{lines:main_unsat}).
After all component hypotheses are constructed, the adapter forms $\mathcal{N}'=(\mathcal{H}_1,\ldots,\mathcal{H}_n)$ and submits a system-level equivalence query to the teacher (Lines~\ref{lines:main_network}--\ref{lines:main_eq}).
If the teacher returns $\top$, the algorithm terminates and returns $\mathcal{N}'$.
Otherwise, the teacher returns a system-level counterexample $u_{\mathit{ce}}$.
For each component, the adapter derives a component-level answer $(w_{\mathit{ce}}^i,\beta)$ from $u_{\mathit{ce}}$, and learner $\mathtt{L}_i$ refines $\mathbf{T}_i$ using $(w_{\mathit{ce}}^i,\beta)$ (Lines~\ref{lines:main_component_eq}--\ref{lines:main_table_refine}).
After all learners have refined their tables, the next round starts.
Algorithm~\ref{alg:Main} returns only after the system-level teacher answers $\top$ to the equivalence query on $\mathcal{N}'$ (Lines~\ref{lines:main_loop_end}--\ref{lines:main_return}), yielding the following correctness result.
\begin{theorem}
\label{theorem:PartialCorrectness}
If Algorithm~\ref{alg:Main} terminates and returns a hypothesis network $\mathcal{N}'$, then $\mathcal{L}(\mathcal{N}')=\mathcal{L}(\mathcal{N})$.
\end{theorem}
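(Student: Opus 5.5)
This is a partial-correctness statement, so the plan is to reason directly about the control flow of Algorithm~\ref{alg:Main} and the semantics of the idealized teacher. Nothing about the adapter's inferences, the TOTU, the SMT step, or backtracking needs to be shown correct. I will argue that the only exit point of the algorithm is the return in Lines~\ref{lines:main_loop_end}--\ref{lines:main_return}, and that this return is guarded by the teacher's answer $\mathsf{EQ}(\mathcal{N}')=\top$.

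First, I will inspect Algorithm~\ref{alg:Main} and confirm that every path leaving the main loop passes through the system-level equivalence query. The hypothesis network $\mathcal{N}'=(\mathcal{H}_1,\ldots,\mathcal{H}_n)$ is assembled in Line~\ref{lines:main_network} and submitted in Line~\ref{lines:main_eq}. When the answer is $(\bot,u_{\mathit{ce}})$, control continues to the refinement lines and then to the next round. The algorithm can therefore return only from a round in which the teacher answered $\top$. Moreover, the returned network is exactly the one just submitted, because no $\mathcal{H}_i$ is modified between the query and the return.

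Second, I will invoke the teacher's specification from Section~\ref{Sec:Setting}. The idealized teacher holds $\mathcal{N}$ and returns $\top$ on an equivalence query exactly when $\mathcal{L}(\mathcal{N}')=\mathcal{L}(\mathcal{N})$. Combining this with the first step gives the theorem.

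One point needs a remark: each $\mathcal{H}_i$ must be a legitimate RTA, so that $\mathcal{N}'$ is an NRTA and the query is well-posed. This holds whether $\mathcal{H}_i$ comes from the satisfiable-TOTU construction or from the fallback $\mathcal{H}_i=\mathcal{P}_i$. In both cases the result is a finite set of locations together with transitions whose guards lie in $\Phi(\{x_i\})$. This also holds after the guard rearrangement and pruning stages, where it should be noted that the initial location $q_\epsilon$ survives because $f(\epsilon,\epsilon)=\top$. The clocks $x_1,\ldots,x_n$ are pairwise distinct by construction.

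I expect no real obstacle. The only care needed is to make explicit that correctness rests entirely on the teacher's exactness, and not on any properties of the intermediate component answers, which may be spurious.
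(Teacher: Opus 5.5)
Your proposal is correct and follows the paper's own argument: Algorithm~\ref{alg:Main} can only return after the idealized system-level teacher answers $\top$ to the equivalence query on the network $\mathcal{N}'$ it returns, and by the teacher's specification that answer means exactly $\mathcal{L}(\mathcal{N}')=\mathcal{L}(\mathcal{N})$. Your extra remark that each $\mathcal{H}_i$ is a well-formed RTA is not part of the paper's argument, but it is accurate and harmless, including the point that $q_\epsilon$ survives pruning because $f(\epsilon,\epsilon)=\top$ is never overwritten.
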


Unlike learning algorithms for untimed synchronous systems~\cite{Compositional_DFA_learning} and individual RTAs~\cite{RTA_learning}, Algorithm~\ref{alg:Main} does not guarantee termination.
For an individual RTA, termination relies on exact query answers and the finite behavioral partition induced by its \emph{clock ceiling}, i.e., the largest integer constant in its guards.
In our compositional setting, however, unknown answers and possible spurious counterexamples may cause repeated refinement and backtracking, so the same convergence argument does not apply.
Nevertheless, the algorithm successfully completes most benchmark instances within the resource limits (cf. Section~\ref{Sec:Experiment}).
As future work, we will investigate general termination guarantees, which may require additional prior knowledge about the target systems.

The learned components need not be individually language-equivalent to their targets, and such equivalence is generally unidentifiable from system-level queries.
For example, Fig.~\ref{fig:non-identifiability} compares the target network $\mathcal{N}=(\mathcal{A}_1,\mathcal{A}_2)$ with a hypothesis network $\mathcal{N}'=(\mathcal{H}_1,\mathcal{H}_2)$, where $\mathcal{H}_1=\mathcal{A}_1$.

\input{Figures/NonIdentifiability}

\noindent
By the definition of the timed language of an NRTA, in both $\mathcal{N}$ and $\mathcal{N}'$, an action $a$ can occur exactly when its preceding delay exceeds five time units.
Hence, $\mathcal{L}(\mathcal{N}')=\mathcal{L}(\mathcal{N})$.
However, $\mathcal{L}(\mathcal{H}_2)\ne\mathcal{L}(\mathcal{A}_2)$, since a timed word such as $w=(3.5,a)(3.5,a)$ belongs only to $\mathcal{L}(\mathcal{A}_2)$.
The system-level teacher therefore cannot return a counterexample that exposes this component-level difference.

Despite this limitation, for any subset of components, the target and learned subnetworks have the same timed behaviors over actions not used by components outside the subset.
This allows verification of such behaviors to focus on the selected components.
The following lemma formalizes this property.

\begin{lemma}
\label{lemma:subnetwork-preservation}
Suppose Algorithm~\ref{alg:Main} terminates with $\mathcal{N}'$.
For every nonempty $I\subseteq\{1,\ldots,n\}$, let $\mathcal{N}_I=(\mathcal{A}_i)_{i\in I}$, $\mathcal{N}'_I=(\mathcal{H}_i)_{i\in I}$, and $\Sigma_I^{\mathit{loc}}=\Sigma\setminus\bigcup_{j\notin I}\Sigma_j$.
Then,
\vspace{-1mm}
\[
  \mathcal{L}(\mathcal{N}_I)\cap(\NR\times\Sigma_I^{\mathit{loc}})^*
  =
  \mathcal{L}(\mathcal{N}'_I)\cap(\NR\times\Sigma_I^{\mathit{loc}})^*.
\]
\end{lemma}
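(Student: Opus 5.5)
The plan is to reduce the statement to the system-level equality $\mathcal{L}(\mathcal{N}')=\mathcal{L}(\mathcal{N})$ from Theorem~\ref{theorem:PartialCorrectness}, using the component-wise characterization of Lemma~\ref{lemma:proj_sound} twice: once for the full networks and once for the subnetworks. The key observation is that a word $u$ over $\Sigma_I^{\mathit{loc}}$ contains no action of any $\Sigma_j$ with $j\notin I$. Hence $\project{u}{j}=\epsilon$ for every $j\notin I$, so the components outside $I$ impose no constraint on $u$, provided they accept $\epsilon$.

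\textbf{Step 1: setup.} Fix a nonempty $I$ and $u\in(\NR\times\Sigma_I^{\mathit{loc}})^*$. Every action of $u$ lies in $\Sigma_I^{\mathit{loc}}\subseteq\bigcup_{i\in I}\Sigma_i$, so $u$ is a timed word over the alphabet of the subnetwork. Projection onto $\Sigma_i$ for $i\in I$ is defined purely in terms of $\Sigma_i$, so $\project{u}{i}$ is the same whether $u$ is regarded as a word of $\mathcal{N}$ or of $\mathcal{N}_I$. Applying Lemma~\ref{lemma:proj_sound} to the NRTA $\mathcal{N}_I$, and likewise to $\mathcal{N}'_I$, gives
\[
u\in\mathcal{L}(\mathcal{N}_I)\iff\forall i\in I:\project{u}{i}\in\mathcal{L}(\mathcal{A}_i),
\qquad
u\in\mathcal{L}(\mathcal{N}'_I)\iff\forall i\in I:\project{u}{i}\in\mathcal{L}(\mathcal{H}_i).
\]

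\textbf{Step 2: the outside components accept $\epsilon$.} For $j\notin I$ we have $\project{u}{j}=\epsilon$.

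For the target, $\epsilon\in\mathcal{L}(\mathcal{A}_j)$ holds trivially, since the empty run always exists.

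For a hypothesis $\mathcal{H}_j$, this is the step I expect to need care. Stage~4 pruning, or a spurious refinement, could in principle delete the initial location, so I will not argue from the construction of $\mathcal{H}_j$. Instead I argue semantically: $\epsilon\in\mathcal{L}(\mathcal{N})$, and by Theorem~\ref{theorem:PartialCorrectness} $\mathcal{L}(\mathcal{N})=\mathcal{L}(\mathcal{N}')$, so $\epsilon\in\mathcal{L}(\mathcal{N}')$. Lemma~\ref{lemma:proj_sound} then yields $\epsilon=\project{\epsilon}{j}\in\mathcal{L}(\mathcal{H}_j)$ for every $j$.

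\textbf{Step 3: chain the equivalences.} Combining Steps~1 and~2 with Lemma~\ref{lemma:proj_sound} for $\mathcal{N}$ and for $\mathcal{N}'$:
\[
u\in\mathcal{L}(\mathcal{N}_I)\iff\forall i:\project{u}{i}\in\mathcal{L}(\mathcal{A}_i)\iff u\in\mathcal{L}(\mathcal{N})\iff u\in\mathcal{L}(\mathcal{N}')\iff\forall i:\project{u}{i}\in\mathcal{L}(\mathcal{H}_i)\iff u\in\mathcal{L}(\mathcal{N}'_I).
\]
The first and last equivalences use Step~2 to add or drop the conjuncts for $j\notin I$, which hold automatically. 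The middle equivalence is Theorem~\ref{theorem:PartialCorrectness}.

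Since $u$ was an arbitrary word over $\Sigma_I^{\mathit{loc}}$, intersecting both sides with $(\NR\times\Sigma_I^{\mathit{loc}})^*$ gives the claimed equality. The only nontrivial point is obtaining $\epsilon\in\mathcal{L}(\mathcal{H}_j)$ from the global equivalence rather than from the hypothesis construction.
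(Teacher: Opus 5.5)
Your proof is correct and follows the paper's argument. For $u$ over $\Sigma_I^{\mathit{loc}}$ the projections onto components outside $I$ are $\epsilon$, so Lemma~\ref{lemma:proj_sound}, applied to $\mathcal{N}$ versus $\mathcal{N}_I$ and to $\mathcal{N}'$ versus $\mathcal{N}'_I$, reduces the claim to $\mathcal{L}(\mathcal{N}')=\mathcal{L}(\mathcal{N})$. The only difference is that the paper simply notes that every RTA accepts $\epsilon$ (the empty run from the initial location always exists), whereas you derive $\epsilon\in\mathcal{L}(\mathcal{H}_j)$ from the global equality --- a harmless extra precaution.
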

The proof is provided in Appendix~\ref{app:proof-subnetwork-preservation}.

%% file: Algorithms/MainLoop.tex
\begin{compactalgorithm}[t]{-3mm}
  \caption{Learning NRTA from a system-level teacher}
  \label{alg:Main}
  \SetKwInOut{Input}{input}
  \SetKwInOut{Output}{output}

  \Input{the number $n$ of components; component alphabets $\Sigma_1,\ldots,\Sigma_n$.}
	\Output{upon termination, an NRTA $\mathcal{N}'=(\mathcal{H}_1,\ldots,\mathcal{H}_n)$ with $\mathcal{L}(\mathcal{N}')=\mathcal{L}(\mathcal{N})$.}

  initialize learners $\mathtt{L}_1,\ldots,\mathtt{L}_n$ with TOTUs $\mathbf{T}_1,\ldots,\mathbf{T}_n$\;
  \label{lines:main_init}
	  initialize the adapter with RTAs $\mathcal{P}_1,\ldots,\mathcal{P}_n$\;
	  \label{lines:main_pe_init}
			  \Repeat{$\mathit{ans}=\top$}{
			    \label{lines:main_loop}
				    \For{$i \gets 1$ \KwTo $n$}{
		      \label{lines:main_for}
			      $\mathsf{Fill}(\mathbf{T}_i)$\;
			      \label{lines:main_fill}
			      make $\mathbf{T}_i$ prepared\;
			      \label{lines:main_prepare}
				\textbf{if} $\mathbf{T}_i$ is satisfiable \textbf{then} construct $\mathcal{H}_i$ from $\mathbf{T}_i$\;
  				\textbf{else} $\mathcal{H}_i\leftarrow\mathcal{P}_i$\;
		      \label{lines:main_sat}
		      \label{lines:main_unsat}
				    }
			    \label{lines:main_barrier}
			    $\mathcal{N}' \gets (\mathcal{H}_1,\ldots,\mathcal{H}_n)$\;
	    \label{lines:main_network}
			$\mathit{ans}\gets\mathsf{EQ}(\mathcal{N}')$\;
	      \label{lines:main_eq}
	    \If{$\mathit{ans}=({\bot},u_{\mathit{ce}})$}{
	      \label{lines:main_not_eq}
	      \For{$i \gets 1$ \KwTo $n$}{
					        derive $(w_{\mathit{ce}}^i,\beta)$ from $u_{\mathit{ce}}$\;
			        \label{lines:main_component_eq}
			        refine $\mathbf{T}_i$ using $(w_{\mathit{ce}}^i,\beta)$\;
		        \label{lines:main_negative_refine}
		        \label{lines:main_positive_refine}
		      }
	      \label{lines:main_table_refine}
		    }
		  }
		  \label{lines:main_loop_end}
		  \Return{$\mathcal{N}'$}\;
	  \label{lines:main_return}
\end{compactalgorithm}

%% file: Figures/NonIdentifiability.tex
\begin{figure}[H]
\vspace{-5mm}
\centering
\scalebox{0.88}{%
\begin{tikzpicture}[
  ->,
  >=Stealth,
  shorten >=1pt,
  auto,
  node distance=3.5cm,
  semithick,
  every node/.style={font=\normalsize},
  every state/.style={minimum size=2.0em,inner sep=1pt},
  initial text={}
]
  \node[initial,state] (a1) {$q_0^1$};
  \node[initial,state,right=of a1] (h1) {$q_0^2$};
  \node[initial,state,right=of h1] (a2) {$q_0^2$};

  \path
    (a1) edge[loop right] node[right,align=center] {$a$\\$5<x_1$} (a1)
    (h1) edge[loop right] node[right,align=center] {$a$\\$3<x_2$} (h1)
    (a2) edge[loop right] node[right,align=center] {$a$\\$4<x_2$} (a2);

  \node[below=1pt of a1] {$\mathcal{A}_1 (\mathcal{H}_1)$};
  \node[below=1pt of h1] {$\mathcal{A}_2$};
  \node[below=1pt of a2] {$\mathcal{H}_2$};
\end{tikzpicture}
}
\vspace{-4.5mm}
\caption{Target and hypothesis component RTAs.}
\label{fig:non-identifiability}
\vspace{-6mm}
\end{figure}

%% file: Sections/7-Experiment.tex
\section{Evaluation}
\label{Sec:Experiment}

We implement the compositional learning approach in Java with Z3 (v4.14.1.0) for SMT solving.
We denote our approach by $\mathsf{CL}$ and compare it with two monolithic approaches that learn a single automaton, namely one based on \emph{timed observation tables} ($\mathsf{TOT}$)~\cite{HSCC24_Multi_Clock} and one based on \emph{timed classification trees} ($\mathsf{TCT}$)~\cite{Science_China_TTT}.
As public implementations of the baselines are unavailable, we reimplement them from their published algorithms in Java and provide the prior information required by their respective learning settings.
All three approaches use the same system-level teacher backend.
Experiments run on an Ubuntu~22.04 virtual machine with an AMD EPYC~9754 processor.
The artifact repository provides the source code, benchmark instances, and raw data~\cite{ArtifactRepository}. Further experimental setup details and an illustrative learning example are provided in Appendices~\ref{app:teacher-baselines} and~\ref{app:illustrative-example}, respectively.

\subsection{Randomly Generated Benchmark Instances}
\label{sec:Random_Exp}

We first evaluate the approaches on eight groups of randomly generated synchronous RTA networks, each containing 30 benchmark instances.
Each run is subject to a 300-second timeout and a 16-GB memory limit.
\input{Tables/RandomCaseResults}

Table~\ref{tab:random-groups} reports the group-wise and overall results.
For each group, $n$ and $|\Sigma|$ give the fixed number of components and the fixed size of the system alphabet, while $|Q_i|$ is the fixed number of locations in each component.
The quantities $\overline{|\Delta_i|}$ and $\overline{K_i}$ are averaged over all components and benchmark instances in the group, where $|\Delta_i|$ is the number of transitions and $K_i$ is the largest integer constant in the guards of component $\mathcal{A}_i$.
``Solved'' is the number of benchmark instances successfully learned within the time and memory limits.
All other runs timed out without exhausting the available memory.
PAR-1 is the average running time over all benchmark instances, with each timeout assigned the 300-second cutoff.
Using G1 as the baseline, G2--G6 increase $n$, $|\Sigma|$, $|Q_i|$, $\overline{|\Delta_i|}$, and $\overline{K_i}$, respectively.
G7 and G8 simultaneously increase several structural parameters to form larger-scale benchmark instances.

Across all groups, $\mathsf{TCT}$ consistently outperforms $\mathsf{TOT}$, which is consistent with the reported efficiency advantage~\cite{Science_China_TTT}.
$\mathsf{CL}$ achieves success rates comparable to $\mathsf{TCT}$ on G1--G6, while solving substantially more instances than both monolithic approaches on the larger-scale groups G7 and G8.
One possible explanation is that monolithic learning becomes more costly as the target system grows, particularly with increases in the number of components and component transitions, as suggested by G2 and G5.
By learning component hypotheses separately, $\mathsf{CL}$ may mitigate the effect of growth along these dimensions.
This potential advantage comes with additional processing for unknown entries and possible spurious counterexamples, which may offset the benefits of compositional learning on some instances.
For instance, backtracking occurs in 112 of the 213 successful $\mathsf{CL}$ runs.
Across these 213 runs, the mean and maximum backtracking counts are 1.30 and 11, respectively.

\input{Figures/RandomSolvedOverlap}
Figure~\ref{fig:random-solved-overlap} shows the overlap of the instances solved by the three approaches across all benchmark instances (areas not to scale).
Overall, $\mathsf{CL}$ solves 213 of the 240 (88.8\%) instances, compared with 89 for $\mathsf{TOT}$ and 175 for $\mathsf{TCT}$.
$\mathsf{CL}$ does not solve 13 instances solved by at least one monolithic approach, while uniquely solving 51 others.
On the 85 jointly solved instances, $\mathsf{CL}$ averages 14.3 system-level equivalence queries, fewer than $\mathsf{TOT}$ (33.8) and $\mathsf{TCT}$ (28.2), while its 367.7 system-level membership queries lie between $\mathsf{TOT}$ (214.6) and \mbox{$\mathsf{TCT}$ (435.2)}.
Since answering a system-level equivalence query generally requires more computation than answering a system-level membership query, the lower system-level equivalence query count may partly explain $\mathsf{CL}$'s runtime advantage on some instances.
Additional analyses are provided in Appendix~\ref{app:evaluation}.

\input{Tables/RealCaseResults}

\subsection{Adapted Case-Study Benchmarks}
\label{sec:Real_Exp}

We further evaluate the approaches on six benchmark instances adapted from case studies reported in the literature~\cite{non_RTA,AGR_2023} and examples provided by verification tools~\cite{Andre2016PartialOrder,uppaaltutorial}.
The adaptations preserve component structure and relevant timed synchronization while re-encoding non-NRTA features, such as \emph{location invariants} and \emph{shared variables}, as documented in the artifact repository~\cite{ArtifactRepository}.
Each run is subject to a 600-second timeout and a 16-GB memory limit.

Table~\ref{tab:real-cases} reports model sizes and learning times for each instance.
The notation follows Table~\ref{tab:random-groups}, except that $\overline{|Q_i|}$ reports the average number of locations per component because component sizes may vary within an instance.
All running times are reported in seconds, and ``TO'' denotes a timeout.
$\mathsf{CL}$ solves all six instances, while $\mathsf{TCT}$ solves two and $\mathsf{TOT}$ solves one.
On Fischer, all three approaches terminate, with both monolithic approaches requiring substantially less learning time than $\mathsf{CL}$.
Although these results do not demonstrate a uniform runtime advantage, they support the feasibility of compositional learning across the diverse synchronization structures in these case studies.

%% file: Tables/RandomCaseResults.tex
\begin{compacttable}[t]{-3mm}
\centering
\caption{Experimental results on learning randomly generated benchmark instances}
\vspace{-1mm}
\label{tab:random-groups}
\setlength{\tabcolsep}{6pt}
\resizebox{0.98\textwidth}{!}{%
\begin{tabular}{cccccc ccc rrr}
\toprule
\multirow{2}{*}{Group} & \multirow{2}{*}{$n$} & \multirow{2}{*}{$|\Sigma|$} & \multirow{2}{*}{$|Q_i|$} & \multirow{2}{*}{$\overline{|\Delta_i|}$} & \multirow{2}{*}{$\overline{K_i}$} & \multicolumn{3}{c}{Solved} & \multicolumn{3}{c}{PAR-1 (in seconds)} \\
\cmidrule(lr){7-9}\cmidrule(lr){10-12}
& & & & & & $\mathsf{TOT}$ & $\mathsf{TCT}$ & $\mathsf{CL}$ & \multicolumn{1}{c}{$\mathsf{TOT}$} & \multicolumn{1}{c}{$\mathsf{TCT}$} & \multicolumn{1}{c}{$\mathsf{CL}$} \\
\midrule
G1 & 2 & 3 & 2 & 3.88 & 7.85 & 21 & 30 & 28 & 98.587 & 6.107 & 31.521 \\
G2 & 3 & 3 & 2 & 3.77 & 7.78 & 7 & 20 & 26 & 240.342 & 123.023 & 72.827 \\
G3 & 2 & 4 & 2 & 3.73 & 7.83 & 17 & 30 & 30 & 149.354 & 12.268 & 13.989 \\
G4 & 2 & 3 & 3 & 4.08 & 7.67 & 16 & 30 & 28 & 146.130 & 6.606 & 31.840 \\
G5 & 2 & 3 & 2 & 7.93 & 7.97 & 9 & 23 & 26 & 213.846 & 104.549 & 60.191 \\
G6 & 2 & 3 & 2 & 3.85 & 15.85 & 16 & 30 & 29 & 143.062 & 21.598 & 19.637 \\
G7 & 3 & 4 & 3 & 5.96 & 7.93 & 0 & 8 & 23 & 300.000 & 236.655 & 108.065 \\
G8 & 4 & 5 & 2 & 3.87 & 7.89 & 3 & 4 & 23 & 276.780 & 260.566 & 97.315 \\
\midrule
Overall & -- & -- & -- & -- & -- & 89 & 175 & 213 & 196.013 & 96.422 & 54.423 \\
\bottomrule
\end{tabular}
}
\end{compacttable}

%% file: Figures/RandomSolvedOverlap.tex
\begin{wrapfigure}{r}{0.36\textwidth}
\vspace{-6mm}
\centering
\begin{tikzpicture}[x=0.62cm,y=0.62cm,every node/.style={font=\small}]
  \draw[thick,draw=blue!70!black,fill=blue!25,fill opacity=0.28]
    (0,0) ellipse [x radius=2.1,y radius=1.3];
  \draw[thick,draw=red!70!black,fill=red!25,fill opacity=0.28]
    (2.3,0) ellipse [x radius=2.1,y radius=1.3];
  \draw[thick,draw=green!50!black,fill=green!30,fill opacity=0.55]
    (1.8,-0.35) ellipse [x radius=1.25,y radius=0.75];

  \node[text=blue!70!black] at (-0.2,0.9) {$\mathsf{CL}$};
  \node[text=red!70!black] at (2.5,0.9) {$\mathsf{TCT}$};
  \node[text=green!40!black] at (1.8,0) {$\mathsf{TOT}$};

  \node at (-1.0,0) {51};
  \node at (1.1,0.65) {77};
  \node at (1.3,-0.55) {85};
  \node at (2.4,-0.55) {4};
  \node at (3.75,0) {9};
  \node at (3.25,-1.7) {Unsolved: 14};
\end{tikzpicture}
\vspace{-9mm}
\caption{Solved-instance overlap}
\label{fig:random-solved-overlap}
\vspace{-5.5mm}
\end{wrapfigure}

%% file: Tables/RealCaseResults.tex
\begin{compacttable}[t]{-5mm}
\centering
\caption{Experimental results on learning adapted case-study benchmark instances}
\vspace{-1mm}
\label{tab:real-cases}
\setlength{\tabcolsep}{9pt}
\resizebox{0.98\textwidth}{!}{%
\begin{tabular}{cccccc rrr}
\toprule
Benchmark & $n$ & $|\Sigma|$ & $\overline{|Q_i|}$ & $\overline{|\Delta_i|}$ & $\overline{K_i}$ & \multicolumn{1}{c}{$\mathsf{TOT}$} & \multicolumn{1}{c}{$\mathsf{TCT}$} & \multicolumn{1}{c}{$\mathsf{CL}$} \\
\midrule
AUTOSAR & 9 & 7 & 3.11 & 4.78 & 3.33 & \multicolumn{1}{c}{TO} & \multicolumn{1}{c}{TO} & 39.956 \\
Fischer & 3 & 10 & 3.67 & 8.00 & 1.33 & 65.538 & 4.488 & 469.573 \\
IC Final-Test & 2 & 9 & 9.00 & 16.00 & 12.50 & \multicolumn{1}{c}{TO} & \multicolumn{1}{c}{TO} & 69.405 \\
2Doors & 4 & 6 & 5.00 & 6.50 & 4.00 & \multicolumn{1}{c}{TO} & \multicolumn{1}{c}{TO} & 215.970 \\
Bridge & 4 & 6 & 4.50 & 9.00 & 1.75 & \multicolumn{1}{c}{TO} & \multicolumn{1}{c}{TO} & 70.606 \\
Not-Not-And & 5 & 8 & 4.00 & 6.00 & 1.40 & \multicolumn{1}{c}{TO} & 294.141 & 9.107 \\
\bottomrule
\end{tabular}
}
\end{compacttable}

%% file: Sections/8-Related_Work.tex
\section{Related Work}
\label{Sec:Related}
\textbf{Active learning of timed systems} has been studied for target models such as real-time automata~\cite{RTA_learning,non_RTA}, Mealy machines with timers~\cite{mealy_multi_timers,vaandrager2023learning}, and timed automata~\cite{DOTA_learning,HSCC24_Multi_Clock,CAV2023_Multi_Clock}.
Tree-based data structures have also been introduced to accelerate timed learning~\cite{TTT_TA,Science_China_TTT,timed_L_sharp}.
These works usually rely on an idealized teacher but construct a monolithic automaton, whereas our approach constructs a network of component RTAs using only system-level queries.
The component learners are inspired by the learner design of RTA learning~\cite{RTA_learning}.
However, the compositional learning target requires additional mechanisms to deal with unknown answers and possible spurious counterexamples, such as TOTUs with unknown entries, SMT-based hypothesis construction, and adapter-driven refinement.

\smallskip
\noindent
\textbf{Compositional active learning} has mainly focused on untimed systems.
Some works consider settings with decomposable system outputs~\cite{product_automata}, no synchronization among components~\cite{Compsitional_Interleaving}, or direct access to individual black-box components~\cite{FujinamiWASIIH25}.
For untimed synchronous systems with the number of components and their alphabets known, prior work introduced a compositional learning architecture comprising component learners, an adapter, and a system-level teacher~\cite{Compositional_DFA_learning}.
Our approach adopts this architecture for timed synchronous systems and addresses the resulting challenges by designing a query adapter equipped with positive-evidence RTAs and TOTU-based component learners.
Recent untimed work further reduces prior knowledge of the number of components and their alphabets~\cite{henry2025compositional,samadi2025compositional}, which is a future direction for timed compositional learning.

%% file: Sections/9-Conclusion.tex
\section{Conclusion}
\label{Sec:Conclusion}

In this paper, we present a compositional active learning approach for timed synchronous systems using only system-level access.
It combines a query adapter equipped with positive-evidence RTAs and TOTU-based component learners.
Upon termination, the approach learns a deterministic RTA for each component, and these RTAs form a network whose timed language is equivalent to that of the target system.
Within the respective resource limits, our approach terminates successfully on 213 of the 240 (88.8\%) randomly generated benchmark instances and all six adapted case-study benchmark instances.
It also requires less learning time than the monolithic approaches on some instances.
Future work includes characterizing termination conditions for the learning algorithm, reducing prior knowledge about the number of components and their alphabets, and extending the approach to more general timed component models.

%% file: Sections/Appendix.tex
\input{Sections/Appendix/Projection_Lemma}
\input{Sections/Appendix/RTA_Region_Equality}
\input{Sections/Appendix/Subnetwork_Language_Preservation}
\input{Sections/Appendix/Teacher_and_Baseline_Settings}
\input{Sections/Appendix/Additional_Evaluation}
\input{Sections/Appendix/Illustrative_Example}

%% file: Sections/Appendix/Projection_Lemma.tex
\section{Proof of Lemma~\ref{lemma:proj_sound}}
\label{app:proof-projection}

Let $u=(d_1,\sigma_1)\cdots(d_m,\sigma_m)$.
For a component $\mathcal{A}_i$, let $j_1<\cdots<j_k$ be exactly the indices such that $\sigma_{j_\ell}\in\Sigma_i$, and let $j_0=0$.
By the definition of projection,
\[
  \project{u}{i}
  = (D_1,\sigma_{j_1})\cdots(D_k,\sigma_{j_k}),
  \qquad
  D_\ell=\sum_{h=j_{\ell-1}+1}^{j_\ell}d_h.
\]
If there is no such index, then $\project{u}{i}=\epsilon$, which belongs to $\mathcal{L}(\mathcal{A}_i)$.

\medskip
\noindent\textbf{($\Rightarrow$)}
Assume $u\in\mathcal{L}(\mathcal{N})$ and fix a run of $\mathcal{N}$ on $u$.
Between the actions at indices $j_{\ell-1}$ and $j_\ell$, component $\mathcal{A}_i$ takes no transition.
Its location remains unchanged, while the value of $x_i$ continues to increase as time elapses.
Since $x_i$ is initially zero and reset at $j_{\ell-1}$, where $j_0=0$ denotes the initial instant, the value of $x_i$ when $\sigma_{j_\ell}$ occurs, before reset, is
\[
  (v_{j_\ell-1}+d_{j_\ell})(x_i)
  = \sum_{h=j_{\ell-1}+1}^{j_\ell}d_h
  = D_\ell.
\]
By the NRTA run condition, the global step at $j_\ell$ contains a $\sigma_{j_\ell}$-transition of $\mathcal{A}_i$ whose guard is satisfied when $x_i$ has value $D_\ell$.
The transitions extracted at $j_1,\ldots,j_k$ therefore form a run of $\mathcal{A}_i$ on $\project{u}{i}$.
Applying the same argument to every component gives $\project{u}{i}\in\mathcal{L}(\mathcal{A}_i)$ for all $1\le i\le n$.

\medskip
\noindent\textbf{($\Leftarrow$)}
Assume $\project{u}{i}\in\mathcal{L}(\mathcal{A}_i)$ for every component and, for each $i$, fix a run $\rho_i$ of $\mathcal{A}_i$ on $\project{u}{i}$.
We construct a run $\rho$ of $\mathcal{N}$ on $u$ by induction.

After a global prefix has been performed in $\rho$, the location of each component $\mathcal{A}_i$ is the location reached in $\rho_i$ after the actions of that prefix in $\Sigma_i$.
Moreover, the value of $x_i$ equals the sum of the delays since the most recent such action, or since the initial instant if none has occurred.
This correspondence holds for the empty prefix.

Suppose the correspondence holds after the first $j-1$ actions and consider $(d_j,\sigma_j)$.
For every $i$ with $\sigma_j\in\Sigma_i$, the value of $x_i$ immediately before $\sigma_j$ is exactly the next accumulated delay in $\project{u}{i}$.
The next transition in $\rho_i$ is thus an enabled $\sigma_j$-transition from its current location.
All participating components can take these transitions simultaneously, while every other component keeps its location.

The update $v_j=(v_{j-1}+d_j)[\sigma_j]$ resets exactly the clocks of the participating components.
It therefore preserves the correspondence after the $j$-th action.
By induction, this construction yields a run $\rho$ of $\mathcal{N}$ on $u$, and hence $u\in\mathcal{L}(\mathcal{N})$.

%% file: Sections/Appendix/RTA_Region_Equality.tex
\section{Proof of Lemma~\ref{lemma:rta_region_equal}}
\label{app:proof-region-equality}

We say that two delays $d,d'\in\NR$ lie in the same \emph{RTA delay region} if, for some $k\in\mathbb{N}$, either $d=d'=k$ or $d,d'\in(k,k+1)$.
Then, for any atomic constraint
\[
  \theta\in\{x_i\le\kappa,\ x_i<\kappa,\ \kappa\le x_i,\ \kappa<x_i\mid \kappa\in \mathbb{N}\},
\]
it follows that
\[
  \{x_i\mapsto d\}\models\theta
  \quad\Longleftrightarrow\quad
  \{x_i\mapsto d'\}\models\theta.
\]
Since every guard $\phi\in\Phi(\{x_i\})$ is a conjunction of such atomic constraints, this equivalence extends to every guard in $\Phi(\{x_i\})$.
Let $w=(d_1,\sigma_1)\cdots(d_m,\sigma_m)$.
Let $w'=(d'_1,\sigma'_1)\cdots(d'_m,\sigma'_m)$.

\medskip
\noindent\textbf{($\Rightarrow$)}
Since \mbox{$\rabs{w}=\rabs{w'}$}, we have $\sigma_j = \sigma_j'$ and $d_j,d'_j$ lie in the same RTA delay region for every $1\le j\le m$.
Suppose $w\in\mathcal{L}(\mathcal{A}_i)$.
Then, $\mathcal{A}_i$ has a run on $w$ using transitions $(q_{j-1},\sigma_j,\phi_j,q_j)$ with $\{x_i\mapsto d_j\}\models\phi_j$ for every $1\le j\le m$.
By the guard equivalence established above, $\{x_i\mapsto d'_j\}\models\phi_j$ whenever $\{x_i\mapsto d_j\}\models\phi_j$.
Hence, the same transition sequence is also a run on $w'$, and $w'\in\mathcal{L}(\mathcal{A}_i)$.

\medskip
\noindent\textbf{($\Leftarrow$)}
The reverse implication follows by exchanging $w$ and $w'$ in the preceding argument, giving $w'\in\mathcal{L}(\mathcal{A}_i)\Rightarrow w\in\mathcal{L}(\mathcal{A}_i)$.

%% file: Sections/Appendix/Subnetwork_Language_Preservation.tex
\section{Proof of Lemma~\ref{lemma:subnetwork-preservation}}
\label{app:proof-subnetwork-preservation}

Fix a nonempty set $I\subseteq\{1,\ldots,n\}$ and a timed word $u\in(\NR\times\Sigma_I^{\mathit{loc}})^*$.
For every $j\notin I$, the definition of $\Sigma_I^{\mathit{loc}}$ gives $\project{u}{j}=\epsilon$.
The empty word belongs to the language of every RTA, so $\project{u}{j}\in\mathcal{L}(\mathcal{A}_j)$ and $\project{u}{j}\in\mathcal{L}(\mathcal{H}_j)$ for all $j\notin I$.
By Lemma~\ref{lemma:proj_sound} applied to $\mathcal{N}$ and $\mathcal{N}_I$,
\[
  u\in\mathcal{L}(\mathcal{N})
  \quad\Longleftrightarrow\quad
  u\in\mathcal{L}(\mathcal{N}_I).
\]
The same argument applied to $\mathcal{N}'$ and $\mathcal{N}'_I$ gives
\[
  u\in\mathcal{L}(\mathcal{N}')
  \quad\Longleftrightarrow\quad
  u\in\mathcal{L}(\mathcal{N}'_I).
\]

\noindent
By the premise, Algorithm~\ref{alg:Main} has terminated with $\mathcal{L}(\mathcal{N}')=\mathcal{L}(\mathcal{N})$, so the preceding equivalences yield
\[
  u\in\mathcal{L}(\mathcal{N}_I)
  \quad\Longleftrightarrow\quad
  u\in\mathcal{L}(\mathcal{N}'_I).
\]
The claim follows because $u$ was chosen arbitrarily from $(\NR\times\Sigma_I^{\mathit{loc}})^*$.

%% file: Sections/Appendix/Teacher_and_Baseline_Settings.tex
\section{Teacher Implementation and Baseline Settings}
\label{app:teacher-baselines}

\subsection{System-Level Teacher}

All three approaches use the same model-based system-level teacher, which has direct access to the target NRTA $\mathcal{N}$.
For a membership query, the teacher checks whether the queried timed word belongs to $\mathcal{L}(\mathcal{N})$.
For an equivalence query, it performs a breadth-first symbolic exploration of the target and hypothesis networks, representing zones by difference-bound matrices (DBMs)~\cite{Dill1989,LarsenPY97}.
The exploration checks their language difference in both directions and reconstructs a counterexample with the fewest timed actions when a difference is found.
If the exploration finds no difference, the teacher returns $\top$.
Using the same backend for all approaches avoids differences caused by the teacher implementation.

\subsection{Baseline Settings}

Both $\mathsf{TOT}$ and $\mathsf{TCT}$ learn the target network $\mathcal{N}=(\mathcal{A}_1,\ldots,\mathcal{A}_n)$ as a single timed automaton with $n$ clocks and require the following information:
\begin{enumerate}[topsep=5pt,itemsep=1pt,parsep=1pt]
  \item the system alphabet $\Sigma$;
  \item the number $n$ of clocks;
  \item the clock ceiling $K_i$ for each clock $x_i$.
\end{enumerate}
The first two requirements follow from $\Sigma=\bigcup_{i=1}^{n}\Sigma_i$ and the fact that every RTA component has one clock.
The clock ceiling $K_i$ is the largest integer constant in the guards of component $\mathcal{A}_i$.
The region constructions in both original algorithms depend on these ceilings.
We therefore supply the exact $K_i$ values to both baselines, whereas $\mathsf{CL}$ neither receives nor uses them.
In this paper, the timed language of an NRTA is prefix-closed, and $\mathsf{CL}$ exploits this property in its TOTU operations.
The original $\mathsf{TOT}$ and $\mathsf{TCT}$ algorithms instead target general timed languages defined by accepting locations and do not assume prefix closure.
We retain their original learning settings in our implementations.

Moreover, the \emph{smart-teacher} variant of $\mathsf{TOT}$ used in our experiments and $\mathsf{TCT}$ require the teacher to answer \emph{reset-information queries}, which identify the clocks reset by each action in a timed word.
For an NRTA, each component resets its clock whenever it participates in an action.
Hence, the reset set of an action $\sigma$ is $\{x_i\mid\sigma\in\Sigma_i\}$ and can be derived directly from the known component alphabets.
Therefore, reset-information queries require no additional teacher capability in our setting.

%% file: Sections/Appendix/Additional_Evaluation.tex
\section{Additional Analyses for Randomly Generated Benchmark Instances}
\label{app:evaluation}

\subsection{Sensitivity to Timeout Penalties}

For a timeout $T$ and a penalty factor $k$, PAR-$k$ is defined as
\[
  \mathrm{PAR}\text{-}k
  = \frac{\sum_{c\in C_{\mathit{solved}}} T_c
  + kT\,|C_{\mathit{timeout}}|}{|C|},
\]
where $C$ is the set of all benchmark instances, $C_{\mathit{solved}}$ and $C_{\mathit{timeout}}$ are its solved and timed-out subsets, respectively, and $T_c$ is the running time on a solved instance $c$.
Table~\ref{tab:random-par-sensitivity} reports the results for $T=300$ seconds and three penalty factors.
All PAR values are reported in seconds.
The ordering of the approaches is unchanged for all three penalty factors, indicating that the comparison is robust to these choices.

\vspace{-5mm}
\input{Tables/PARSensitivity}
\vspace{-10mm}

\subsection{Cumulative Solved Benchmark Instances}

Figure~\ref{fig:random-cactus} shows the cumulative number of randomly generated benchmark instances solved within each running-time budget.
The linear scale in Fig.~\ref{fig:random-cactus}(a) shows behavior over the full time budget, while the logarithmic scale in Fig.~\ref{fig:random-cactus}(b) expands the early completion times.
Both plots use observed completion times, with timed-out runs remaining unsolved.

\vspace{-3mm}
\begin{figure}[H]
\centering
\begin{minipage}[t]{0.49\textwidth}
\centering
\begin{tikzpicture}
\begin{axis}[
  width=\linewidth,
  height=0.72\linewidth,
  xmin=0,
  xmax=300,
  xtick={0,100,200,300},
  ymin=0,
  ymax=240,
  xlabel={Running time (s)},
  ylabel={Instances solved},
  grid=major
]
\addplot[thick, const plot, mark=none,green!50!black] table[x=time,y=solved] {Figures/random_cactus_TOT.dat};
\addplot[thick, const plot, mark=none,red!70!black] table[x=time,y=solved] {Figures/random_cactus_TCT.dat};
\addplot[thick, const plot, mark=none,blue!70!black] table[x=time,y=solved] {Figures/random_cactus_CL.dat};
\node[font=\scriptsize,anchor=south east,text=green!50!black,yshift=1pt] at (axis cs:285,83) {$\mathsf{TOT}$};
\node[font=\scriptsize,anchor=south east,text=red!70!black,yshift=1pt] at (axis cs:285,168) {$\mathsf{TCT}$};
\node[font=\scriptsize,anchor=south east,text=blue!70!black,yshift=1pt] at (axis cs:285,208) {$\mathsf{CL}$};
\end{axis}
\end{tikzpicture}

(a) Linear scale
\end{minipage}%
\hspace{-1em}%
\begin{minipage}[t]{0.49\textwidth}
\centering
\begin{tikzpicture}
\begin{axis}[
  width=\linewidth,
  height=0.72\linewidth,
  xmode=log,
  log basis x=10,
  xmin=0.005,
  xmax=300,
  xtick={0.01,0.1,1,10,100},
  xticklabels={0.01,0.1,1,10,100},
  ymin=0,
  ymax=240,
  xlabel={Running time (s)},
  grid=major
]
\addplot[thick, const plot, mark=none,green!50!black] table[x=time,y=solved] {Figures/random_cactus_TOT.dat};
\addplot[thick, const plot, mark=none,red!70!black] table[x=time,y=solved] {Figures/random_cactus_TCT.dat};
\addplot[thick, const plot, mark=none,blue!70!black] table[x=time,y=solved] {Figures/random_cactus_CL.dat};
\node[font=\scriptsize,anchor=south east,text=green!50!black,yshift=1pt] at (axis cs:250,80) {$\mathsf{TOT}$};
\node[font=\scriptsize,anchor=south east,text=red!70!black,yshift=1pt] at (axis cs:250,118) {$\mathsf{TCT}$};
\node[font=\scriptsize,anchor=south east,text=blue!70!black,yshift=1pt] at (axis cs:250,200) {$\mathsf{CL}$};
\end{axis}
\end{tikzpicture}

(b) Logarithmic scale
\end{minipage}
\caption{Cumulative solved instances by running time.}
\label{fig:random-cactus}
\end{figure}
\vspace{-5mm}

The linear view in Fig.~\ref{fig:random-cactus}(a) shows $\mathsf{CL}$ continuing to solve instances over a broader range of running times and ultimately reaching a higher total, whereas the logarithmic view in Fig.~\ref{fig:random-cactus}(b) shows $\mathsf{TOT}$ and $\mathsf{TCT}$ initially accumulating solved instances faster.

\subsection{Query Statistics}

For brevity, we use MQ and EQ to denote system-level membership and equivalence queries, respectively.
To expose the per-instance trade-off, Fig.~\ref{fig:random-query-ratios} plots the ratios of the query counts of $\mathsf{CL}$ to those of each monolithic approach.

\vspace{-5mm}
\begin{figure}[H]
\centering
\begin{minipage}[t]{0.49\textwidth}
\centering
\begin{tikzpicture}
\begin{axis}[
  width=\linewidth,
  height=0.72\linewidth,
  xmode=log,
  ymode=log,
  xmin=0.004,
  xmax=20,
  ymin=0.04,
  ymax=3,
  xtick={0.005,0.02,0.1,1,10},
  xticklabels={0.005,0.02,0.1,1,10},
  ytick={0.05,0.1,1,3},
  yticklabels={0.05,0.1,1,3},
  xlabel={MQ ratio},
  ylabel={EQ ratio},
  grid=major
]
\addplot+[only marks, mark=*, mark size=1.2pt] table[x=mq_ratio,y=eq_ratio] {Figures/random_query_ratios_CL_TOT.dat};
\addplot[black,dashed,domain=0.004:20,samples=2] {1};
\draw[black,dashed] (axis cs:1,0.04) -- (axis cs:1,3);
\end{axis}
\end{tikzpicture}

(a) $\mathsf{CL}/\mathsf{TOT}$
\end{minipage}%
\hfill
\begin{minipage}[t]{0.49\textwidth}
\centering
\begin{tikzpicture}
\begin{axis}[
  width=\linewidth,
  height=0.72\linewidth,
  xmode=log,
  ymode=log,
  xmin=0.004,
  xmax=20,
  ymin=0.04,
  ymax=3,
  xtick={0.005,0.02,0.1,1,10},
  xticklabels={0.005,0.02,0.1,1,10},
  ytick={0.05,0.1,1,3},
  yticklabels={0.05,0.1,1,3},
  xlabel={MQ ratio},
  ylabel={EQ ratio},
  grid=major
]
\addplot+[only marks, mark=*, mark size=1.2pt] table[x=mq_ratio,y=eq_ratio] {Figures/random_query_ratios_CL_TCT.dat};
\addplot[black,dashed,domain=0.004:20,samples=2] {1};
\draw[black,dashed] (axis cs:1,0.04) -- (axis cs:1,3);
\end{axis}
\end{tikzpicture}

(b) $\mathsf{CL}/\mathsf{TCT}$
\end{minipage}
\vspace{-1mm}
\caption{Per-instance query-count ratios.}
\label{fig:random-query-ratios}
\end{figure}
\vspace{-5mm}

For each jointly solved instance, the horizontal and vertical coordinates give the ratios of the MQ and EQ counts of $\mathsf{CL}$ to those of the corresponding monolithic approach, respectively.
Thus, points farther left indicate relatively fewer MQs for $\mathsf{CL}$, whereas points farther down indicate relatively fewer EQs.
The dashed lines mark equal query counts.
Both axes use logarithmic scales.
Each point represents an instance solved by both approaches, giving 85 points in Fig.~\ref{fig:random-query-ratios}(a) and 162 in Fig.~\ref{fig:random-query-ratios}(b).

$\mathsf{CL}$ uses no more EQs than $\mathsf{TOT}$ and $\mathsf{TCT}$ on 70 of 85 and 129 of 162 jointly solved instances, respectively.
The corresponding numbers for MQs are 47 and 114, while 45 and 103 instances satisfy both comparisons.
Thus, $\mathsf{CL}$'s relative reduction is more consistent for EQs than MQs.
This aligns with the algorithm structure, where the adapter submits an EQ only after collecting all component hypotheses, while resolving unknown TOTU entries may require additional MQs.
Possible spurious counterexamples may also trigger backtracking.
Overall, compositional learning changes the balance between query types rather than uniformly reducing both.

%% file: Tables/PARSensitivity.tex
\begin{table}[H]
\centering
\caption{Sensitivity to timeout penalties.}
\vspace{-1mm}
\label{tab:random-par-sensitivity}
\setlength{\tabcolsep}{12pt}
\begin{tabular}{crrrr}
\toprule
Approach & Solved & PAR-1 & PAR-2 & PAR-10 \\
\midrule
$\mathsf{TOT}$ & 89/240  & 196.013 & 384.763 & 1894.763 \\
$\mathsf{TCT}$ & 175/240 &  96.422 & 177.672 &  827.672 \\
$\mathsf{CL}$  & 213/240 &  54.423 &  88.173 &  358.173 \\
\bottomrule
\end{tabular}
\end{table}

%% file: Sections/Appendix/Illustrative_Example.tex
\section{Illustrative Example}
\label{app:illustrative-example}
\raggedbottom
\setlength{\intextsep}{2mm}

Consider the target NRTA $\mathcal{N}=(\mathcal{A}_1,\mathcal{A}_2)$ shown in Fig.~\ref{fig:illustrative-target} with $\Sigma_1=\{a,b\}$ and $\Sigma_2=\{b,c\}$.
The complete run uses $75$ system-level membership queries and $10$ system-level equivalence queries.

\input{Figures/IllustrativeTarget}

Let $\mathbf{T}_i^j$ denote a snapshot of the prepared TOTU of learner $\mathtt{L}_i$ at the $j$-th system-level equivalence query, and $\mathcal{H}_i^j$ the corresponding hypothesis.
The TOTUs and component hypotheses for rounds 1--5 are shown after their corresponding descriptions.
The TOTUs and hypotheses in rounds 6--10 are shown separately.
In each TOTU, the rows above the horizontal division are indexed by $\bm{S}$, and those below it are indexed by $\bm{R}$.
An entry with value $\unk$ is shown as ``$?(\beta)$'', where $\beta\in\{\top,\bot\}$ is its concrete value in the satisfying model used to construct the hypothesis.
Each system-level counterexample is shown to the right of the corresponding TOTUs, followed by its two component projections on separate lines.

\newif\ifillustrativeearly
\newif\ifillustrativeroundfour
\newif\ifillustrativehypfive
\newif\ifillustrativetotufour
\newif\ifillustrativetotufive
\newif\ifillustrativefive
\newif\ifillustrativesix
\newif\ifillustrativeseven
\newif\ifillustrativeeight
\newif\ifillustrativenine
\tikzset{
  hyp/.style={->,>=Stealth,initial text={},state/.append style={minimum size=4.5mm,inner sep=1pt}},
}

\smallskip
\noindent\textbf{Initialization.}
Each learner $\mathtt{L}_i$ initializes its TOTU with $\bm{S}=\bm{E}=\{\epsilon\}$ and $\bm{R}=\{([0,0],\sigma)\mid\sigma\in\Sigma_i\}$, where $f(\epsilon,\epsilon)=\top$ and all other entries have value $\unk$.
The adapter initializes each $\mathcal{P}_i$ with only its initial location and no transitions, and each backtracking stack $\mathsf{BT}_i$ is empty.

\smallskip
\noindent\textbf{Round 1.}
The learners fill and prepare $\mathbf{T}_1^1$ and $\mathbf{T}_2^1$ and construct the corresponding hypotheses.
The first equivalence query returns $u_{\mathit{ce}}^1=(9/2,a)\notin\mathcal{L}(\mathcal{N})$.
Its first projection has RTA region word $((4,5),a)$, while its second projection is $\epsilon$.
Since $\project{u_{\mathit{ce}}^1}{1}\notin\mathcal{L}(\mathcal{P}_1)$, the adapter returns $(\project{u_{\mathit{ce}}^1}{1},\bot)$ to $\mathtt{L}_1$.
Since the last action $a$ does not belong to $\Sigma_2$, it returns $(\epsilon,\top)$ to $\mathtt{L}_2$.
Learner $\mathtt{L}_1$ pushes $(\rabs{\project{u_{\mathit{ce}}^1}{1}},\mathbf{T}_1^1)$ onto $\mathsf{BT}_1$ and processes the $\bot$ feedback.
Figures~\ref{fig:illustrative-totu-round-1} and~\ref{fig:illustrative-hypotheses-one} show the corresponding TOTUs and component hypotheses.

\input{Tables/IllustrativeTOTURound1}
\input{Tables/IllustrativeHypothesisRound1}

\noindent\textbf{Round 2.}
The negative counterexample $u_{\mathit{ce}}^2=(0,a)(0,a)$ concerns only the first component.
Since $\project{u_{\mathit{ce}}^2}{1}\notin\mathcal{L}(\mathcal{P}_1)$, the adapter returns $(\project{u_{\mathit{ce}}^2}{1},\bot)$ to $\mathtt{L}_1$.
Since the last action $a$ does not belong to $\Sigma_2$, it returns $(\epsilon,\top)$ to $\mathtt{L}_2$.
Learner $\mathtt{L}_1$ pushes $(\rabs{\project{u_{\mathit{ce}}^2}{1}},\mathbf{T}_1^2)$ onto $\mathsf{BT}_1$ before processing the $\bot$ feedback.
Indeed, the first $a$ moves $\mathcal{A}_1$ from $q_0^1$ to $q_1^1$, where the second $a$ cannot occur.
Figures~\ref{fig:illustrative-totu-round-2} and~\ref{fig:illustrative-hypotheses-two} show the corresponding TOTUs and component hypotheses.

\input{Tables/IllustrativeTOTURound2}
\input{Tables/IllustrativeHypothesisRound2}

\smallskip
\noindent\textbf{Round 3.}
For $u_{\mathit{ce}}^3=(1/2,a)(2,b)\notin\mathcal{L}(\mathcal{N})$, the last action $b$ belongs to both component alphabets, and neither projection belongs to its positive-evidence RTA.
The adapter therefore returns $(\project{u_{\mathit{ce}}^3}{i},\bot)$ to each learner $\mathtt{L}_i$, which pushes $(\rabs{\project{u_{\mathit{ce}}^3}{i}},\mathbf{T}_i^3)$ onto $\mathsf{BT}_i$ before processing the $\bot$ feedback.
In the target system, $\project{u_{\mathit{ce}}^3}{1}\in\mathcal{L}(\mathcal{A}_1)$, whereas the accumulated delay of the projected $b$ is $5/2$ and hence $\project{u_{\mathit{ce}}^3}{2}\notin\mathcal{L}(\mathcal{A}_2)$.
Thus, the component-level counterexample returned to $\mathtt{L}_1$ is spurious, whereas that returned to $\mathtt{L}_2$ is not.
Figures~\ref{fig:illustrative-totu-round-3} and~\ref{fig:illustrative-hypotheses-three} show the corresponding TOTUs and component hypotheses.

\input{Tables/IllustrativeTOTURound3}
\input{Tables/IllustrativeHypothesisRound3}

\smallskip
\noindent\textbf{Round 4.}
The teacher returns the positive counterexample $u_{\mathit{ce}}^4=(0,a)(2,b)$.
The adapter calls $\mathsf{addPE}_1$ and $\mathsf{addPE}_2$ on the two projections, whose RTA region words are $([0,0],a)([2,2],b)$ and $([2,2],b)$, respectively, and returns $(\project{u_{\mathit{ce}}^4}{i},\top)$ to each learner $\mathtt{L}_i$.
For each $i\in\{1,2\}$, $\mathsf{BT}_i$ contains no $(\pi,\mathbf{T}_i^k)$ such that $\pi$ is a prefix of the corresponding RTA region word, so both learners process the $\top$ feedback without rollback.
Figures~\ref{fig:illustrative-totu-round-4} and~\ref{fig:illustrative-hypotheses-four} show the corresponding TOTUs and component hypotheses.

\illustrativetotufourtrue
\illustrativetotufivefalse
\input{Tables/IllustrativeTOTURoundsThreeFour}

\illustrativeearlyfalse
\illustrativeroundfourtrue
\illustrativehypfivefalse
\input{Tables/IllustrativeHypotheses}
\illustrativeroundfourfalse

\newpage
\noindent\textbf{Round 5.}
The teacher returns the negative counterexample
\[
  u_{\mathit{ce}}^5=(0,a)(0,b)(0,b).
\]
Since the last action $b$ belongs to both component alphabets and neither projection belongs to its positive-evidence RTA, the adapter returns $(\project{u_{\mathit{ce}}^5}{i},\bot)$ to each learner $\mathtt{L}_i$.
Each learner $\mathtt{L}_i$ pushes $(\rabs{\project{u_{\mathit{ce}}^5}{i}},\mathbf{T}_i^5)$ onto $\mathsf{BT}_i$ and processes the $\bot$ feedback.
In fact, its first projection belongs to $\mathcal{L}(\mathcal{A}_1)$, while its second projection does not belong to $\mathcal{L}(\mathcal{A}_2)$.
Figures~\ref{fig:illustrative-totu-round-5} and~\ref{fig:illustrative-hypotheses-five} show the corresponding TOTUs and component hypotheses.

\illustrativetotufourfalse
\illustrativetotufivetrue
\input{Tables/IllustrativeTOTURoundsThreeFour}
\illustrativetotufivefalse

\illustrativeearlyfalse
\illustrativeroundfourfalse
\illustrativehypfivetrue
\input{Tables/IllustrativeHypotheses}
\illustrativehypfivefalse

\newpage
\noindent\textbf{Round 6.}
The positive counterexample
\[
  u_{\mathit{ce}}^6=(0,a)(0,b)(0,c)(0,b)
\]
reveals the spurious feedback from round~5.
The adapter adds both projections to the corresponding positive-evidence RTAs and returns $(\project{u_{\mathit{ce}}^6}{i},\top)$ to each learner $\mathtt{L}_i$.
Its first projection has RTA region word
\[
  \pi=([0,0],a)([0,0],b)([0,0],b),
\]
and $(\pi,\mathbf{T}_1^5)$ was pushed onto $\mathsf{BT}_1$ in round~5.
Learner $\mathtt{L}_1$ restores $\mathbf{T}_1^5$, pops $(\pi,\mathbf{T}_1^5)$ and all entries above it, and then processes the $\top$ feedback, which in particular sets $f(\pi,\epsilon)=\top$.
The entries pushed in rounds 1--3 remain in $\mathsf{BT}_1$.

For the second component,
\[
  \rabs{\project{u_{\mathit{ce}}^6}{2}}
  =([0,0],b)([0,0],c)([0,0],b).
\]
$\mathsf{BT}_2$ contains no $(\pi',\mathbf{T}_2^k)$ such that $\pi'$ is a prefix of this RTA region word, so $\mathtt{L}_2$ processes the $\top$ feedback without rollback.
For every positive projection in the remaining rounds, the corresponding backtracking stack contains no entry whose first element is its prefix, so no further rollback occurs.

\input{Tables/IllustrativeTOTURound5}

\illustrativeearlyfalse
\illustrativeroundfourfalse
\illustrativefivetrue
\input{Tables/IllustrativeHypotheses}

\newpage
\noindent\textbf{Round 7.}
The positive counterexample $u_{\mathit{ce}}^7=(0,a)(0,b)(0,c)(2,b)$ further enriches both positive-evidence RTAs.
The adapter returns each projection with label $\top$ to its learner.
Both learners then process the $\top$ feedback.

\input{Tables/IllustrativeTOTURound6}

\illustrativefivefalse
\illustrativesixtrue
\illustrativesevenfalse
\illustrativeeightfalse
\illustrativeninefalse
\input{Tables/IllustrativeHypotheses}

\newpage
\noindent\textbf{Round 8.}
The next positive counterexample is
\[
  u_{\mathit{ce}}^8=(0,a)(2,b)(0,c)(2,b).
\]
The adapter again adds both projections to the positive-evidence RTAs and returns $(\project{u_{\mathit{ce}}^8}{i},\top)$ to each learner $\mathtt{L}_i$.

\input{Tables/IllustrativeTOTURound7}

\illustrativesixfalse
\illustrativeseventrue
\input{Tables/IllustrativeHypotheses}

\clearpage
\smallskip
\noindent\textbf{Round 9.}
The final counterexample $u_{\mathit{ce}}^9=(1/3,a)(1/12,b)(1/12,c)(23/12,b)$ is positive.
The adapter adds its projections to the positive-evidence RTAs and returns $(\project{u_{\mathit{ce}}^9}{i},\top)$ to each learner $\mathtt{L}_i$.

\input{Tables/IllustrativeTOTURound8}

\illustrativesevenfalse
\illustrativeeighttrue
\input{Tables/IllustrativeHypotheses}

\clearpage
\smallskip
\noindent\textbf{Round 10.}
The hypotheses constructed from $\mathbf{T}_1^{10}$ and $\mathbf{T}_2^{10}$ pass the final system-level equivalence query.

\input{Tables/IllustrativeTOTURound9}

\illustrativeeightfalse
\illustrativeninetrue
\input{Tables/IllustrativeHypotheses}

The final hypotheses do not recover the target components exactly.
In particular, $\mathcal{L}(\mathcal{H}_1^{10})\subsetneq\mathcal{L}(\mathcal{A}_1)$, although the learned and target networks have the same timed language.

%% file: Figures/IllustrativeTarget.tex
\begingroup
\setlength{\intextsep}{2mm}
\tikzset{
  hyp/.style={->,>=Stealth,initial text={},state/.append style={minimum size=4.5mm,inner sep=1pt}},
}
\begin{figure}[H]
\centering
\setlength{\tabcolsep}{5pt}
\begin{tabular}{c@{\hspace{8mm}}c}
\begin{tikzpicture}[hyp,baseline=(current bounding box.center)]
  \node[state,initial] (q0) {$q_0^1$};
  \node[state,right=25mm of q0] (q1) {$q_1^1$};
  \path (q0) edge node[above] {$a,\ 0\le x_1\le4$} (q1)
        (q1) edge[loop right] node[right] {$b$} (q1);
\end{tikzpicture}
&
\begin{tikzpicture}[hyp,baseline=(current bounding box.center)]
  \node[state,initial] (q0) {$q_0^2$};
  \node[state,right=22mm of q0] (q1) {$q_1^2$};
  \path (q0) edge[bend left=15] node[above] {$b,\ 0\le x_2\le2$} (q1)
        (q1) edge[bend left=15] node[below] {$c$} (q0);
\end{tikzpicture}
\\[3mm]
$\mathcal{A}_1$ & $\mathcal{A}_2$
\end{tabular}
\caption{Target components of the illustrative example}
\label{fig:illustrative-target}
\end{figure}
\endgroup

%% file: Tables/IllustrativeTOTURound1.tex
\begin{figure}[H]
\centering
\setlength{\tabcolsep}{5pt}
\resizebox{0.7\linewidth}{!}{%
\begin{tabular}{@{}l@{\hspace{3mm}}l@{\hspace{4mm}}l@{}}
\raisebox{-\height}{%
\begin{tabular}[t]{@{}l|c@{}}
\toprule
$\mathbf{T}_1^1$ & $\epsilon$\\ \midrule
$\epsilon$ & $\top$\\
\midrule
$([0,0],a)$ & $\top$\\
$([0,0],b)$ & $?(\top)$\\
\bottomrule
\end{tabular}%
} &
\raisebox{-\height}{%
\begin{tabular}[t]{@{}l|c@{}}
\toprule
$\mathbf{T}_2^1$ & $\epsilon$\\ \midrule
$\epsilon$ & $\top$\\
$([0,0],c)$ & $\bot$\\
\midrule
$([0,0],b)$ & $?(\bot)$\\
\bottomrule
\end{tabular}%
} &
\raisebox{\dimexpr-\height-3ex\relax}{%
\begin{tabular}[t]{@{}l@{}}
\begingroup
\setbox0=\hbox{$u_{\mathit{ce}}^1=(9/2,a)\notin\mathcal{L}(\mathcal{N})$}%
\setbox1=\hbox{$\rabs{\project{u_{\mathit{ce}}^1}{1}}=((4,5),a)$}%
\setbox2=\hbox{$\rabs{\project{u_{\mathit{ce}}^1}{2}}=\epsilon$}%
\dimen0=\wd0
\ifdim\wd1>\dimen0 \dimen0=\wd1\fi
\ifdim\wd2>\dimen0 \dimen0=\wd2\fi
\begin{tabular}[t]{@{}l@{}}
$u_{\mathit{ce}}^1=(9/2,a)\notin\mathcal{L}(\mathcal{N})$\\[-1ex]
$\xrightarrow{\makebox[\dimen0][c]{}}$\\[-0.5ex]
$\rabs{\project{u_{\mathit{ce}}^1}{1}}=((4,5),a)$\\[0.5ex]
$\rabs{\project{u_{\mathit{ce}}^1}{2}}=\epsilon$
\end{tabular}
\endgroup
\end{tabular}%
}
\end{tabular}%
}
\caption{TOTUs and counterexample in round 1}
\label{fig:illustrative-totu-round-1}
\end{figure}

%% file: Tables/IllustrativeHypothesisRound1.tex
\begin{figure}[H]
\centering
\setlength{\tabcolsep}{5pt}
\begin{tabular}{c@{\hspace{6mm}}c}
\begin{tikzpicture}[hyp,baseline=(current bounding box.center)]
  \node[state,initial] (q0) {$q_0$};
  \path (q0) edge[loop right] node[right,align=left] {$a$\\[3mm]$b$} (q0);
\end{tikzpicture}
&
\begin{tikzpicture}[hyp,baseline=(current bounding box.center)]
  \node[state,initial] (q0) {$q_0$};
\end{tikzpicture}
\\[3mm]
$\mathcal{H}_1^1$ & $\mathcal{H}_2^1$
\end{tabular}
\caption{Component hypotheses in round 1}
\label{fig:illustrative-hypotheses-one}
\end{figure}

%% file: Tables/IllustrativeTOTURound2.tex
\begin{figure}[H]
\centering
\setlength{\tabcolsep}{5pt}
\resizebox{0.7\linewidth}{!}{%
\begin{tabular}{@{}l@{\hspace{3mm}}l@{\hspace{4mm}}l@{}}
\raisebox{-\height}{%
\begin{tabular}[t]{@{}l|c@{}}
\toprule
$\mathbf{T}_1^2$ & $\epsilon$\\ \midrule
$\epsilon$ & $\top$\\
$((4,5),a)$ & $\bot$\\
\midrule
$([0,0],a)$ & $\top$\\
$([0,0],b)$ & $?(\bot)$\\
\bottomrule
\end{tabular}%
} &
\raisebox{-\height}{%
\begin{tabular}[t]{@{}l|c@{}}
\toprule
$\mathbf{T}_2^2$ & $\epsilon$\\ \midrule
$\epsilon$ & $\top$\\
$([0,0],c)$ & $\bot$\\
\midrule
$([0,0],b)$ & $?(\bot)$\\
\bottomrule
\end{tabular}%
} &
\raisebox{\dimexpr-\height-3ex\relax}{%
\begin{tabular}[t]{@{}l@{}}
\begingroup
\setbox0=\hbox{$u_{\mathit{ce}}^2=(0,a)(0,a)\notin\mathcal{L}(\mathcal{N})$}%
\setbox1=\hbox{$\rabs{\project{u_{\mathit{ce}}^2}{1}}=([0,0],a)([0,0],a)$}%
\setbox2=\hbox{$\rabs{\project{u_{\mathit{ce}}^2}{2}}=\epsilon$}%
\dimen0=\wd0
\ifdim\wd1>\dimen0 \dimen0=\wd1\fi
\ifdim\wd2>\dimen0 \dimen0=\wd2\fi
\begin{tabular}[t]{@{}l@{}}
$u_{\mathit{ce}}^2=(0,a)(0,a)\notin\mathcal{L}(\mathcal{N})$\\[-1ex]
$\xrightarrow{\makebox[\dimen0][c]{}}$\\[-0.5ex]
$\rabs{\project{u_{\mathit{ce}}^2}{1}}=([0,0],a)([0,0],a)$\\[0.5ex]
$\rabs{\project{u_{\mathit{ce}}^2}{2}}=\epsilon$
\end{tabular}
\endgroup
\end{tabular}%
}
\end{tabular}%
}
\caption{TOTUs and counterexample in round 2}
\label{fig:illustrative-totu-round-2}
\end{figure}

%% file: Tables/IllustrativeHypothesisRound2.tex
\begin{figure}[H]
\centering
\setlength{\tabcolsep}{5pt}
\begin{tabular}{c@{\hspace{6mm}}c}
\begin{tikzpicture}[hyp,baseline=(current bounding box.center)]
  \node[state,initial] (q0) {$q_0$};
  \path (q0) edge[loop right] node[right] {$a,\ 0\le x_1\le4$} (q0);
\end{tikzpicture}
&
\begin{tikzpicture}[hyp,baseline=(current bounding box.center)]
  \node[state,initial] (q0) {$q_0$};
\end{tikzpicture}
\\[3mm]
$\mathcal{H}_1^2$ & $\mathcal{H}_2^2$
\end{tabular}
\caption{Component hypotheses in round 2}
\label{fig:illustrative-hypotheses-two}
\end{figure}

%% file: Tables/IllustrativeTOTURound3.tex
\begin{figure}[H]
\centering
\setlength{\tabcolsep}{5pt}
\resizebox{0.9\linewidth}{!}{%
\begin{tabular}{@{}l@{\hspace{3mm}}l@{\hspace{4mm}}l@{}}
\raisebox{-\height}{%
\begin{tabular}[t]{@{}l|cc@{}}
\toprule
$\mathbf{T}_1^3$ & $\epsilon$ & $([0,0],a)$\\ \midrule
$\epsilon$ & $\top$ & $\top$\\
$((4,5),a)$ & $\bot$ & $\bot$\\
$([0,0],a)$ & $\top$ & $\bot$\\
\midrule
$([0,0],b)$ & $?(\bot)$ & $?(\bot)$\\
$([0,0],a)([0,0],a)$ & $\bot$ & $\bot$\\
$([0,0],a)([0,0],b)$ & $\top$ & $\bot$\\
\bottomrule
\end{tabular}%
} &
\raisebox{-\height}{%
\begin{tabular}[t]{@{}l|c@{}}
\toprule
$\mathbf{T}_2^3$ & $\epsilon$\\ \midrule
$\epsilon$ & $\top$\\
$([0,0],c)$ & $\bot$\\
\midrule
$([0,0],b)$ & $\top$\\
\bottomrule
\end{tabular}%
} &
\raisebox{\dimexpr-\height-3ex\relax}{%
\begin{tabular}[t]{@{}l@{}}
\begingroup
\setbox0=\hbox{$u_{\mathit{ce}}^3=(1/2,a)(2,b)\notin\mathcal{L}(\mathcal{N})$}%
\setbox1=\hbox{$\rabs{\project{u_{\mathit{ce}}^3}{1}}=((0,1),a)([2,2],b)$}%
\setbox2=\hbox{$\rabs{\project{u_{\mathit{ce}}^3}{2}}=((2,3),b)$}%
\dimen0=\wd0
\ifdim\wd1>\dimen0 \dimen0=\wd1\fi
\ifdim\wd2>\dimen0 \dimen0=\wd2\fi
\begin{tabular}[t]{@{}l@{}}
$u_{\mathit{ce}}^3=(1/2,a)(2,b)\notin\mathcal{L}(\mathcal{N})$\\[-1ex]
$\xrightarrow{\makebox[\dimen0][c]{}}$\\[-0.5ex]
$\rabs{\project{u_{\mathit{ce}}^3}{1}}=((0,1),a)([2,2],b)$\\[0.5ex]
$\rabs{\project{u_{\mathit{ce}}^3}{2}}=((2,3),b)$
\end{tabular}
\endgroup
\end{tabular}%
}
\end{tabular}%
}
\caption{TOTUs and counterexample in round 3}
\label{fig:illustrative-totu-round-3}
\end{figure}

%% file: Tables/IllustrativeHypothesisRound3.tex
\begin{figure}[H]
\centering
\setlength{\tabcolsep}{5pt}
\begin{tabular}{c@{\hspace{6mm}}c}
\begin{tikzpicture}[hyp,baseline=(current bounding box.center)]
  \node[state,initial] (q0) {$q_0$};
  \node[state,right=25mm of q0] (q1) {$q_1$};
  \path (q0) edge node[above] {$a,\ 0\le x_1\le4$} (q1)
        (q1) edge[loop right] node[right] {$b$} (q1);
\end{tikzpicture}
&
\begin{tikzpicture}[hyp,baseline=(current bounding box.center)]
  \node[state,initial] (q0) {$q_0$};
  \path (q0) edge[loop right] node[right] {$b$} (q0);
\end{tikzpicture}
\\[3mm]
$\mathcal{H}_1^3$ & $\mathcal{H}_2^3$
\end{tabular}
\caption{Component hypotheses in round 3}
\label{fig:illustrative-hypotheses-three}
\end{figure}

%% file: Tables/IllustrativeTOTURoundsThreeFour.tex
\ifillustrativetotufour
\begin{figure}[H]
\centering
\setlength{\tabcolsep}{2.5pt}
\resizebox{0.82\linewidth}{!}{%
\begin{tabular}{@{}l@{\hspace{3mm}}l@{\hspace{4mm}}l@{}}
\raisebox{-\height}{%
\begin{tabular}[t]{@{}l|cc@{}}
\toprule
$\mathbf{T}_1^4$ & $\epsilon$ & $([0,0],a)$\\ \midrule
$\epsilon$ & $\top$ & $\top$\\
$((4,5),a)$ & $\bot$ & $\bot$\\
$([0,0],a)$ & $\top$ & $\bot$\\
\midrule
$((0,1),a)$ & $\top$ & $\bot$\\
$([0,0],b)$ & $\bot$ & $\bot$\\
$((0,1),a)([2,2],b)$ & $\bot$ & $\bot$\\
$([0,0],a)([0,0],a)$ & $\bot$ & $\bot$\\
$([0,0],a)([0,0],b)$ & $\top$ & $\bot$\\
\bottomrule
\end{tabular}%
} &
\raisebox{-\height}{%
\begin{tabular}[t]{@{}l|c@{}}
\toprule
$\mathbf{T}_2^4$ & $\epsilon$\\ \midrule
$\epsilon$ & $\top$\\
$([0,0],c)$ & $\bot$\\
\midrule
$((2,3),b)$ & $\bot$\\
$([0,0],b)$ & $\top$\\
\bottomrule
\end{tabular}%
} &
\raisebox{\dimexpr-\height-3ex\relax}{%
\begin{tabular}[t]{@{}l@{}}
\begingroup
\setbox0=\hbox{$ u_{\mathit{ce}}^4=(0,a)(2,b)\in\mathcal{L}(\mathcal{N}) $}%
\setbox1=\hbox{$ \rabs{\project{u_{\mathit{ce}}^4}{1}}=([0,0],a)([2,2],b) $}%
\setbox2=\hbox{$ \rabs{\project{u_{\mathit{ce}}^4}{2}}=([2,2],b) $}%
\dimen0=\wd0
\ifdim\wd1>\dimen0 \dimen0=\wd1\fi
\ifdim\wd2>\dimen0 \dimen0=\wd2\fi
\begin{tabular}[t]{@{}l@{}}
$ u_{\mathit{ce}}^4=(0,a)(2,b)\in\mathcal{L}(\mathcal{N}) $\\[-1ex]
$\xrightarrow{\makebox[\dimen0][c]{}}$\\[-0.5ex]
$\rabs{\project{u_{\mathit{ce}}^4}{1}}=([0,0],a)([2,2],b)$\\[0.5ex]
$\rabs{\project{u_{\mathit{ce}}^4}{2}}=([2,2],b)$
\end{tabular}
\endgroup
\end{tabular}%
}
\end{tabular}%
}
\caption{TOTUs and counterexample in round 4}
\label{fig:illustrative-totu-round-4}
\end{figure}
\fi

\ifillustrativetotufive
\begin{figure}[H]
\centering
\setlength{\tabcolsep}{2.5pt}
\resizebox{0.85\linewidth}{!}{%
\begin{tabular}{@{}l@{\hspace{4mm}}l@{}}
\raisebox{-\height}{%
\begin{tabular}[t]{@{}l@{}}
\begin{tabular}[t]{@{}l|ccc@{}}
\toprule
$\mathbf{T}_1^5$ & $\epsilon$ & $([0,0],a)$ & $([2,2],b)$\\ \midrule
$\epsilon$ & $\top$ & $\top$ & $\bot$\\
$((0,1),a)$ & $\top$ & $\bot$ & $\bot$\\
$((4,5),a)$ & $\bot$ & $\bot$ & $\bot$\\
$([0,0],a)$ & $\top$ & $\bot$ & $\top$\\
\midrule
$([0,0],b)$ & $\bot$ & $\bot$ & $\bot$\\
$((0,1),a)([0,0],a)$ & $\bot$ & $\bot$ & $\bot$\\
$((0,1),a)([0,0],b)$ & $\top$ & $\bot$ & $?(\bot)$\\
$((0,1),a)([2,2],b)$ & $\bot$ & $\bot$ & $\bot$\\
$([0,0],a)([0,0],a)$ & $\bot$ & $\bot$ & $\bot$\\
$([0,0],a)([0,0],b)$ & $\top$ & $\bot$ & $?(\bot)$\\
$([0,0],a)([2,2],b)$ & $\top$ & $\bot$ & $?(\bot)$\\
\bottomrule
\end{tabular}%
\end{tabular}%
} &
\raisebox{-\height}{%
\begin{tabular}[t]{@{}l@{}}
\begin{tabular}[t]{@{}l|c@{}}
\toprule
$\mathbf{T}_2^5$ & $\epsilon$\\ \midrule
$\epsilon$ & $\top$\\
$([0,0],c)$ & $\bot$\\
\midrule
$((2,3),b)$ & $\bot$\\
$([0,0],b)$ & $\top$\\
$([2,2],b)$ & $\top$\\
\bottomrule
\end{tabular}\\[35mm]
\begin{tabular}[t]{@{}l@{}}
\begingroup
\setbox0=\hbox{$ u_{\mathit{ce}}^5=(0,a)(0,b)(0,b)\notin\mathcal{L}(\mathcal{N}) $}%
\setbox1=\hbox{$ \rabs{\project{u_{\mathit{ce}}^5}{1}}=([0,0],a)([0,0],b)([0,0],b) $}%
\setbox2=\hbox{$ \rabs{\project{u_{\mathit{ce}}^5}{2}}=([0,0],b)([0,0],b) $}%
\dimen0=\wd0
\ifdim\wd1>\dimen0 \dimen0=\wd1\fi
\ifdim\wd2>\dimen0 \dimen0=\wd2\fi
\begin{tabular}[t]{@{}l@{}}
$ u_{\mathit{ce}}^5=(0,a)(0,b)(0,b)\notin\mathcal{L}(\mathcal{N}) $\\[-1ex]
$\xrightarrow{\makebox[\dimen0][c]{}}$\\[-0.5ex]
$\rabs{\project{u_{\mathit{ce}}^5}{1}}=([0,0],a)([0,0],b)([0,0],b)$\\[0.5ex]
$\rabs{\project{u_{\mathit{ce}}^5}{2}}=([0,0],b)([0,0],b)$
\end{tabular}
\endgroup
\end{tabular}
\end{tabular}%
}
\end{tabular}
}
\caption{TOTUs and counterexample in round 5}
\label{fig:illustrative-totu-round-5}
\end{figure}
\fi

%% file: Tables/IllustrativeHypotheses.tex
\begingroup
\tikzset{
  hyp/.style={->,>=Stealth,initial text={},state/.append style={minimum size=4.5mm,inner sep=1pt}},
}

\ifillustrativeearly
\begin{figure}[H]
\centering
\setlength{\tabcolsep}{5pt}
\begin{tabular}{c@{\hspace{6mm}}c}
\begin{tikzpicture}[hyp,baseline=(current bounding box.center)]
  \node[state,initial] (q0) {$q_0$};
  \path (q0) edge[loop right] node[right,align=left] {$a$\\[3mm]$b$} (q0);
\end{tikzpicture}
&
\begin{tikzpicture}[hyp,baseline=(current bounding box.center)]
  \node[state,initial] (q0) {$q_0$};
\end{tikzpicture}
\\[3mm]
 $\mathcal{H}_1^1$ & $\mathcal{H}_2^1$
\end{tabular}
\caption{Component hypotheses in round 1}
\label{fig:illustrative-hypotheses-one}
\end{figure}

\begin{figure}[H]
\centering
\setlength{\tabcolsep}{5pt}
\begin{tabular}{c@{\hspace{6mm}}c}
\begin{tikzpicture}[hyp,baseline=(current bounding box.center)]
  \node[state,initial] (q0) {$q_0$};
  \path (q0) edge[loop right] node[right] {$a,\ 0\le x_1\le4$} (q0);
\end{tikzpicture}
&
\begin{tikzpicture}[hyp,baseline=(current bounding box.center)]
  \node[state,initial] (q0) {$q_0$};
\end{tikzpicture}
\\[3mm]
 $\mathcal{H}_1^2$ & $\mathcal{H}_2^2$
\end{tabular}
\caption{Component hypotheses in round 2}
\label{fig:illustrative-hypotheses-two}
\end{figure}

\begin{figure}[H]
\centering
\setlength{\tabcolsep}{5pt}
\begin{tabular}{c@{\hspace{6mm}}c}
\begin{tikzpicture}[hyp,baseline=(current bounding box.center)]
  \node[state,initial] (q0) {$q_0$};
  \node[state,right=25mm of q0] (q1) {$q_1$};
  \path (q0) edge node[above] {$a,\ 0\le x_1\le4$} (q1)
        (q1) edge[loop right] node[right] {$b$} (q1);
\end{tikzpicture}
&
\begin{tikzpicture}[hyp,baseline=(current bounding box.center)]
  \node[state,initial] (q0) {$q_0$};
  \path (q0) edge[loop right] node[right] {$b$} (q0);
\end{tikzpicture}
\\[3mm]
 $\mathcal{H}_1^3$ & $\mathcal{H}_2^3$
\end{tabular}
\caption{Component hypotheses in round 3}
\label{fig:illustrative-hypotheses-three}
\end{figure}
\fi

\ifillustrativeroundfour
\begin{figure}[H]
\centering
\setlength{\tabcolsep}{5pt}
\begin{tabular}{c@{\hspace{6mm}}c}
\begin{tikzpicture}[hyp,baseline=(current bounding box.center)]
  \node[state,initial] (q0) {$q_0$};
  \node[state,right=23.5mm of q0] (q1) {$q_1$};
  \path (q0) edge node[above] {$a,\ 0\le x_1\le4$} (q1)
        (q1) edge[loop right] node[right] {$b,\ 0\le x_1<2$} (q1);
\end{tikzpicture}
&
\begin{tikzpicture}[hyp,baseline=(current bounding box.center)]
  \node[state,initial] (q0) {$q_0$};
  \path (q0) edge[loop right] node[right] {$b,\ 0\le x_2\le2$} (q0);
\end{tikzpicture}
\\[3mm]

$\mathcal{H}_1^4$ & $\mathcal{H}_2^4$
\end{tabular}
\caption{Component hypotheses in round 4}
\label{fig:illustrative-hypotheses-four}
\end{figure}
\fi

\ifillustrativehypfive
\begin{figure}[H]
\centering
\setlength{\tabcolsep}{5pt}
\begin{tabular}{c@{\hspace{6mm}}c}

\begin{tikzpicture}[hyp,baseline=(current bounding box.center)]
  \node[state,initial] (q0) {$q_0$};
  \node[state,above right=2mm and 25mm of q0] (q1) {$q_1$};
  \node[state,below right=2mm and 25mm of q0] (q3) {$q_3$};
  \path (q0) edge node[above,sloped] {$a,\ x_1=0$} (q1)
             edge node[below,sloped] {$a,\ 0<x_1\le4$} (q3)
        (q1) edge node[right] {$b$} (q3)
        (q3) edge[loop right] node[right] {$b,\ 0\le x_1<2$} (q3);
\end{tikzpicture}
&
\begin{tikzpicture}[hyp,baseline=(current bounding box.center)]
  \node[state,initial] (q0) {$q_0$};
  \path (q0) edge[loop right] node[right] {$b,\ 0\le x_2\le2$} (q0);
\end{tikzpicture}
\\[3mm]
$\mathcal{H}_1^5$ & $\mathcal{H}_2^5$
\end{tabular}
\caption{Component hypotheses in round 5}
\label{fig:illustrative-hypotheses-five}
\end{figure}
\fi

\ifillustrativefive
\begin{figure}[H]
\centering
\setlength{\tabcolsep}{5pt}
\begin{tabular}{c@{\hspace{8mm}}c}
\begin{tikzpicture}[hyp,baseline=(current bounding box.center)]
  \node[state,initial] (q0) {$q_0$};
  \node[state,above right=3mm and 23mm of q0] (q1) {$q_1$};
  \node[state,below right=3mm and 23mm of q0] (q3) {$q_3$};
  \node[state,right=46mm of q0] (q4) {$q_4$};
  \path (q0) edge node[above,sloped] {$a,\ x_1=0$} (q1)
             edge node[below,sloped] {$a,\ 0<x_1\le4$} (q3)
        (q1) edge node[above,sloped] {$b$} (q4)
        (q3) edge node[below,sloped] {$b,\ 0\le x_1<2$} (q4);
\end{tikzpicture}
&
\begin{tikzpicture}[hyp,baseline=(current bounding box.center)]
  \node[state,initial] (q0) {$q_0$};
  \node[state,right=22mm of q0] (q1) {$q_1$};
  \path (q0) edge node[above] {$b,\ 0\le x_2\le2$} (q1)
        (q1) edge[loop right] node[right] {$c$} (q1);
\end{tikzpicture}
\\[3mm]
$\mathcal{H}_1^6$ & $\mathcal{H}_2^6$
\end{tabular}
\caption{Component hypotheses in round 6}
\label{fig:illustrative-hypotheses-six}
\end{figure}
\fi

\ifillustrativesix
\begin{figure}[H]
\centering
\setlength{\tabcolsep}{5pt}
\begin{tabular}{c@{\hspace{8mm}}c}
\begin{tikzpicture}[hyp,baseline=(current bounding box.center)]
  \node[state,initial] (q0) {$q_0$};
  \node[state,above right=3mm and 25mm of q0] (q1) {$q_1$};
  \node[state,below right=3mm and 25mm of q0] (q3) {$q_3$};
  \path (q0) edge node[above,sloped] {$a,\ x_1=0$} (q1)
             edge node[below,sloped] {$a,\ 0<x_1\le4$} (q3)
        (q1) edge node[right] {$b$} (q3)
        (q3) edge[loop right] node[right] {$b,\ 0\le x_1<2$} (q3);
\end{tikzpicture}
&
\begin{tikzpicture}[hyp,baseline=(current bounding box.center)]
  \node[state,initial] (q0) {$q_0$};
  \node[state,right=22mm of q0] (q1) {$q_1$};
  \path (q0) edge[bend left=15] node[above] {$b,\ 0\le x_2\le2$} (q1)
        (q1) edge[bend left=15] node[below] {$c$} (q0);
\end{tikzpicture}
\\[3mm]
 $\mathcal{H}_1^7$ & $\mathcal{H}_2^7$
\end{tabular}
\caption{Component hypotheses in round 7}
\label{fig:illustrative-hypotheses-seven}
\end{figure}
\fi

\ifillustrativeseven
\begin{figure}[H]
\centering
\setlength{\tabcolsep}{5pt}
\begin{tabular}{c@{\hspace{8mm}}c}
\begin{tikzpicture}[hyp,baseline=(current bounding box.center)]
  \node[state,initial] (q0) {$q_0$};
  \node[state,above right=3mm and 25mm of q0] (q1) {$q_1$};
  \node[state,below right=3mm and 25mm of q0] (q3) {$q_3$};
  \path (q0) edge node[above,sloped] {$a,\ x_1=0$} (q1)
             edge node[below,sloped] {$a,\ 0<x_1\le4$} (q3)
        (q1) edge[loop right] node[right] {$b,\ 0\le x_1<2$} (q1)
             edge node[right] {$b,\ 2\le x_1$} (q3)
        (q3) edge[loop right] node[right] {$b,\ 0\le x_1<2$} (q3);
\end{tikzpicture}
&
\begin{tikzpicture}[hyp,baseline=(current bounding box.center)]
  \node[state,initial] (q0) {$q_0$};
  \node[state,right=22mm of q0] (q1) {$q_1$};
  \path (q0) edge[bend left=15] node[above] {$b,\ 0\le x_2\le2$} (q1)
        (q1) edge[bend left=15] node[below] {$c$} (q0);
\end{tikzpicture}
\\[3mm]
 $\mathcal{H}_1^8$ & $\mathcal{H}_2^8$
\end{tabular}
\caption{Component hypotheses in round 8}
\label{fig:illustrative-hypotheses-eight}
\end{figure}
\fi

\ifillustrativeeight
\begin{figure}[H]
\centering
\setlength{\tabcolsep}{5pt}
\begin{tabular}{c@{\hspace{8mm}}c}
\begin{tikzpicture}[hyp,baseline=(current bounding box.center)]
  \node[state,initial] (q0) {$q_0$};
  \node[state,above right=3mm and 25mm of q0] (q1) {$q_1$};
  \node[state,below right=3mm and 25mm of q0] (q3) {$q_3$};
  \path (q0) edge node[above,sloped] {$a,\ x_1=0$} (q1)
             edge node[below,sloped] {$a,\ 0<x_1\le4$} (q3)
        (q1) edge[loop right] node[right] {$b$} (q1)
        (q3) edge[loop right] node[right] {$b,\ 0\le x_1<2$} (q3);
\end{tikzpicture}
&
\begin{tikzpicture}[hyp,baseline=(current bounding box.center)]
  \node[state,initial] (q0) {$q_0$};
  \node[state,right=22mm of q0] (q1) {$q_1$};
  \path (q0) edge[bend left=15] node[above] {$b,\ 0\le x_2\le2$} (q1)
        (q1) edge[bend left=15] node[below] {$c$} (q0);
\end{tikzpicture}
\\[3mm]
 $\mathcal{H}_1^9$ & $\mathcal{H}_2^9$
\end{tabular}
\caption{Component hypotheses in round 9}
\label{fig:illustrative-hypotheses-nine}
\end{figure}
\fi

\ifillustrativenine
\begin{figure}[H]
\centering
\setlength{\tabcolsep}{5pt}
\begin{tabular}{c@{\hspace{8mm}}c}
\begin{tikzpicture}[hyp,baseline=(current bounding box.center)]
  \node[state,initial] (q0) {$q_0$};
  \node[state,above right=3mm and 25mm of q0] (q1) {$q_1$};
  \node[state,below right=3mm and 25mm of q0] (q3) {$q_3$};
  \path (q0) edge node[above,sloped] {$a,\ x_1=0$} (q1)
             edge node[below,sloped] {$a,\ 0<x_1\le4$} (q3)
        (q1) edge[loop right] node[right] {$b$} (q1)
        (q3) edge node[right] {$b,\ 0\le x_1<2$} (q1);
\end{tikzpicture}
&
\begin{tikzpicture}[hyp,baseline=(current bounding box.center)]
  \node[state,initial] (q0) {$q_0$};
  \node[state,right=22mm of q0] (q1) {$q_1$};
  \path (q0) edge[bend left=15] node[above] {$b,\ 0\le x_2\le2$} (q1)
        (q1) edge[bend left=15] node[below] {$c$} (q0);
\end{tikzpicture}
\\[3mm]
 $\mathcal{H}_1^{10}$ & $\mathcal{H}_2^{10}$
\end{tabular}
\caption{Component hypotheses in round 10}
\label{fig:illustrative-hypotheses-ten}
\end{figure}
\fi
\endgroup

%% file: Tables/IllustrativeTOTURound5.tex
\begin{figure}[H]
\centering
\scriptsize
\setlength{\tabcolsep}{2pt}
\resizebox{0.98\linewidth}{!}{%
\begin{tabular}{@{}l@{\hspace{1mm}}l@{}}
\raisebox{-\height}{%
\begin{tabular}[t]{@{}l@{}}
\begin{tabular}[t]{@{}l|cccc@{}}
\toprule
$\mathbf{T}_1^6$ & $\epsilon$ & $([0,0],a)$ & $([2,2],b)$ & $([0,0],b)$\\ \midrule
$\epsilon$ & $\top$ & $\top$ & $\bot$ & $\bot$\\
$((0,1),a)$ & $\top$ & $\bot$ & $\bot$ & $\top$\\
$((4,5),a)$ & $\bot$ & $\bot$ & $\bot$ & $\bot$\\
$([0,0],a)$ & $\top$ & $\bot$ & $\top$ & $\top$\\
$([0,0],a)([0,0],b)$ & $\top$ & $\bot$ & $?(\bot)$ & $\bot$\\
\midrule
$([0,0],b)$ & $\bot$ & $\bot$ & $\bot$ & $\bot$\\
$((0,1),a)([0,0],a)$ & $\bot$ & $\bot$ & $\bot$ & $\bot$\\
$((0,1),a)([0,0],b)$ & $\top$ & $\bot$ & $?(\bot)$ & $?(\bot)$\\
$((0,1),a)([2,2],b)$ & $\bot$ & $\bot$ & $\bot$ & $\bot$\\
$([0,0],a)([0,0],a)$ & $\bot$ & $\bot$ & $\bot$ & $\bot$\\
$([0,0],a)([2,2],b)$ & $\top$ & $\bot$ & $?(\bot)$ & $?(\bot)$\\
$([0,0],a)([0,0],b)([0,0],a)$ & $\bot$ & $\bot$ & $\bot$ & $\bot$\\
$([0,0],a)([0,0],b)([0,0],b)$ & $\bot$ & $\bot$ & $\bot$ & $\bot$\\
\bottomrule
\end{tabular}%
\end{tabular}%
} &
\raisebox{-\height}{%
\begin{tabular}[t]{@{}l@{}}
\begin{tabular}[t]{@{}l|cc@{}}
\toprule
$\mathbf{T}_2^6$ & $\epsilon$ & $([0,0],b)$\\ \midrule
$\epsilon$ & $\top$ & $\top$\\
$([0,0],b)$ & $\top$ & $\bot$\\
$([0,0],c)$ & $\bot$ & $\bot$\\
\midrule
$((2,3),b)$ & $\bot$ & $\bot$\\
$([2,2],b)$ & $\top$ & $?(\bot)$\\
$([0,0],b)([0,0],b)$ & $\bot$ & $\bot$\\
$([0,0],b)([0,0],c)$ & $\top$ & $?(\bot)$\\
\bottomrule
\end{tabular}\\[32mm]
\begin{tabular}[t]{@{}l@{}}
\begingroup
\setbox0=\hbox{$ u_{\mathit{ce}}^6=(0,a)(0,b)(0,c)(0,b)\in\mathcal{L}(\mathcal{N}) $}%
\setbox1=\hbox{$ \rabs{\project{u_{\mathit{ce}}^6}{1}}=([0,0],a)([0,0],b)([0,0],b) $}%
\setbox2=\hbox{$ \rabs{\project{u_{\mathit{ce}}^6}{2}}=([0,0],b)([0,0],c)([0,0],b) $}%
\dimen0=\wd0
\ifdim\wd1>\dimen0 \dimen0=\wd1\fi
\ifdim\wd2>\dimen0 \dimen0=\wd2\fi
\begin{tabular}[t]{@{}l@{}}
$ u_{\mathit{ce}}^6=(0,a)(0,b)(0,c)(0,b)\in\mathcal{L}(\mathcal{N}) $\\[-1ex]
$\xrightarrow{\makebox[\dimen0][c]{}}$\\[-0.5ex]
$\rabs{\project{u_{\mathit{ce}}^6}{1}}=([0,0],a)([0,0],b)([0,0],b)$\\[0.5ex]
$\rabs{\project{u_{\mathit{ce}}^6}{2}}=([0,0],b)([0,0],c)([0,0],b)$
\end{tabular}
\endgroup
\end{tabular}
\end{tabular}
}
\end{tabular}%
}
    \caption{TOTUs and counterexample in round 6}
\label{fig:illustrative-totu-round-6}
\end{figure}

%% file: Tables/IllustrativeTOTURound6.tex
\begin{figure}[H]
\centering

\setlength{\tabcolsep}{2.5pt}
\resizebox{0.9\textwidth}{!}{%
\begin{tabular}{@{}l@{\hspace{4mm}}l@{}}
\raisebox{-\height}{%
\begin{tabular}[t]{@{}l@{}}
\begin{tabular}[t]{@{}l|ccc@{}}
\toprule
$\mathbf{T}_1^7$ & $\epsilon$ & $([0,0],a)$ & $([2,2],b)$\\ \midrule
$\epsilon$ & $\top$ & $\top$ & $\bot$\\
$((0,1),a)$ & $\top$ & $\bot$ & $\bot$\\
$((4,5),a)$ & $\bot$ & $\bot$ & $\bot$\\
$([0,0],a)$ & $\top$ & $\bot$ & $\top$\\
\midrule
$([0,0],b)$ & $\bot$ & $\bot$ & $\bot$\\
$((0,1),a)([0,0],a)$ & $\bot$ & $\bot$ & $\bot$\\
$((0,1),a)([0,0],b)$ & $\top$ & $\bot$ & $?(\bot)$\\
$((0,1),a)([2,2],b)$ & $\bot$ & $\bot$ & $\bot$\\
$([0,0],a)([0,0],a)$ & $\bot$ & $\bot$ & $\bot$\\
$([0,0],a)([0,0],b)$ & $\top$ & $\bot$ & $?(\bot)$\\
$([0,0],a)([2,2],b)$ & $\top$ & $\bot$ & $?(\bot)$\\
$([0,0],a)([0,0],b)([0,0],b)$ & $\top$ & $\bot$ & $?(\bot)$\\
\bottomrule
\end{tabular}%
\end{tabular}%
} &
\raisebox{-\height}{%
\begin{tabular}[t]{@{}l@{}}
\begin{tabular}[t]{@{}l|cc@{}}
\toprule
$\mathbf{T}_2^7$ & $\epsilon$ & $([0,0],b)$\\ \midrule
$\epsilon$ & $\top$ & $\top$\\
$([0,0],b)$ & $\top$ & $\bot$\\
$([0,0],c)$ & $\bot$ & $\bot$\\
\midrule
$((2,3),b)$ & $\bot$ & $\bot$\\
$([2,2],b)$ & $\top$ & $?(\bot)$\\
$([0,0],b)([0,0],b)$ & $\bot$ & $\bot$\\
$([0,0],b)([0,0],c)$ & $\top$ & $\top$\\
$([0,0],b)([0,0],c)([0,0],b)$ & $\top$ & $?(\bot)$\\
\bottomrule
\end{tabular}\\[41mm]
\begin{tabular}[t]{@{}l@{}}
\begingroup
\setbox0=\hbox{$ u_{\mathit{ce}}^7=(0,a)(0,b)(0,c)(2,b)\in\mathcal{L}(\mathcal{N}) $}%
\setbox1=\hbox{$ \rabs{\project{u_{\mathit{ce}}^7}{1}}=([0,0],a)([0,0],b)([2,2],b) $}%
\setbox2=\hbox{$ \rabs{\project{u_{\mathit{ce}}^7}{2}}=([0,0],b)([0,0],c)([2,2],b) $}%
\dimen0=\wd0
\ifdim\wd1>\dimen0 \dimen0=\wd1\fi
\ifdim\wd2>\dimen0 \dimen0=\wd2\fi
\begin{tabular}[t]{@{}l@{}}
$ u_{\mathit{ce}}^7=(0,a)(0,b)(0,c)(2,b)\in\mathcal{L}(\mathcal{N}) $\\[-1ex]
$\xrightarrow{\makebox[\dimen0][c]{}}$\\[-0.5ex]
$\rabs{\project{u_{\mathit{ce}}^7}{1}}=([0,0],a)([0,0],b)([2,2],b)$\\[0.5ex]
$\rabs{\project{u_{\mathit{ce}}^7}{2}}=([0,0],b)([0,0],c)([2,2],b)$
\end{tabular}
\endgroup
\end{tabular}
\end{tabular}%
}
\end{tabular}%
}
    \caption{TOTUs and counterexample in round 7}
\label{fig:illustrative-totu-round-7}
\end{figure}

%% file: Tables/IllustrativeTOTURound7.tex
\begin{figure}[H]
\centering

\setlength{\tabcolsep}{2.5pt}
\resizebox{0.93\textwidth}{!}{%
\begin{tabular}{@{}l@{\hspace{4mm}}l@{}}
\raisebox{-\height}{%
\begin{tabular}[t]{@{}l@{}}
\begin{tabular}[t]{@{}l|ccc@{}}
\toprule
$\mathbf{T}_1^8$ & $\epsilon$ & $([0,0],a)$ & $([2,2],b)$\\ \midrule
$\epsilon$ & $\top$ & $\top$ & $\bot$\\
$((0,1),a)$ & $\top$ & $\bot$ & $\bot$\\
$((4,5),a)$ & $\bot$ & $\bot$ & $\bot$\\
$([0,0],a)$ & $\top$ & $\bot$ & $\top$\\
\midrule
$([0,0],b)$ & $\bot$ & $\bot$ & $\bot$\\
$((0,1),a)([0,0],a)$ & $\bot$ & $\bot$ & $\bot$\\
$((0,1),a)([0,0],b)$ & $\top$ & $\bot$ & $?(\bot)$\\
$((0,1),a)([2,2],b)$ & $\bot$ & $\bot$ & $\bot$\\
$([0,0],a)([0,0],a)$ & $\bot$ & $\bot$ & $\bot$\\
$([0,0],a)([0,0],b)$ & $\top$ & $\bot$ & $\top$\\
$([0,0],a)([2,2],b)$ & $\top$ & $\bot$ & $?(\bot)$\\
$([0,0],a)([0,0],b)([0,0],b)$ & $\top$ & $\bot$ & $?(\top)$\\
$([0,0],a)([0,0],b)([2,2],b)$ & $\top$ & $\bot$ & $?(\bot)$\\
\bottomrule
\end{tabular}%
\end{tabular}%
} &
\raisebox{-\height}{%
\begin{tabular}[t]{@{}l@{}}
\begin{tabular}[t]{@{}l|cc@{}}
\toprule
$\mathbf{T}_2^8$ & $\epsilon$ & $([0,0],b)$\\ \midrule
$\epsilon$ & $\top$ & $\top$\\
$([0,0],b)$ & $\top$ & $\bot$\\
$([0,0],c)$ & $\bot$ & $\bot$\\
\midrule
$((2,3),b)$ & $\bot$ & $\bot$\\
$([2,2],b)$ & $\top$ & $?(\bot)$\\
$([0,0],b)([0,0],b)$ & $\bot$ & $\bot$\\
$([0,0],b)([0,0],c)$ & $\top$ & $\top$\\
$([0,0],b)([0,0],c)([0,0],b)$ & $\top$ & $?(\bot)$\\
$([0,0],b)([0,0],c)([2,2],b)$ & $\top$ & $?(\bot)$\\
\bottomrule
\end{tabular}\\[45mm]
\begin{tabular}[t]{@{}l@{}}
\begingroup
\setbox0=\hbox{$ u_{\mathit{ce}}^8=(0,a)(2,b)(0,c)(2,b)\in\mathcal{L}(\mathcal{N}) $}%
\setbox1=\hbox{$ \rabs{\project{u_{\mathit{ce}}^8}{1}}=([0,0],a)([2,2],b)([2,2],b) $}%
\setbox2=\hbox{$ \rabs{\project{u_{\mathit{ce}}^8}{2}}=([2,2],b)([0,0],c)([2,2],b) $}%
\dimen0=\wd0
\ifdim\wd1>\dimen0 \dimen0=\wd1\fi
\ifdim\wd2>\dimen0 \dimen0=\wd2\fi
\begin{tabular}[t]{@{}l@{}}
$ u_{\mathit{ce}}^8=(0,a)(2,b)(0,c)(2,b)\in\mathcal{L}(\mathcal{N}) $\\[-1ex]
$\xrightarrow{\makebox[\dimen0][c]{}}$\\[-0.5ex]
$\rabs{\project{u_{\mathit{ce}}^8}{1}}=([0,0],a)([2,2],b)([2,2],b)$\\[0.5ex]
$\rabs{\project{u_{\mathit{ce}}^8}{2}}=([2,2],b)([0,0],c)([2,2],b)$
\end{tabular}
\endgroup
\end{tabular}
\end{tabular}%
}
\end{tabular}%
}
    \caption{TOTUs and counterexample in round 8}
\label{fig:illustrative-totu-round-8}
\end{figure}

%% file: Tables/IllustrativeTOTURound8.tex
\begin{figure}[H]
\centering

\setlength{\tabcolsep}{2.5pt}
\resizebox{0.98\textwidth}{!}{%
\begin{tabular}{@{}l@{\hspace{4mm}}l@{}}
\raisebox{-\height}{%
\begin{tabular}[t]{@{}l@{}}
\begin{tabular}[t]{@{}l|ccc@{}}
\toprule
$\mathbf{T}_1^9$ & $\epsilon$ & $([0,0],a)$ & $([2,2],b)$\\ \midrule
$\epsilon$ & $\top$ & $\top$ & $\bot$\\
$((0,1),a)$ & $\top$ & $\bot$ & $\bot$\\
$((4,5),a)$ & $\bot$ & $\bot$ & $\bot$\\
$([0,0],a)$ & $\top$ & $\bot$ & $\top$\\
\midrule
$([0,0],b)$ & $\bot$ & $\bot$ & $\bot$\\
$((0,1),a)([0,0],a)$ & $\bot$ & $\bot$ & $\bot$\\
$((0,1),a)([0,0],b)$ & $\top$ & $\bot$ & $?(\bot)$\\
$((0,1),a)([2,2],b)$ & $\bot$ & $\bot$ & $\bot$\\
$([0,0],a)([0,0],a)$ & $\bot$ & $\bot$ & $\bot$\\
$([0,0],a)([0,0],b)$ & $\top$ & $\bot$ & $\top$\\
$([0,0],a)([2,2],b)$ & $\top$ & $\bot$ & $\top$\\
$([0,0],a)([0,0],b)([0,0],b)$ & $\top$ & $\bot$ & $?(\top)$\\
$([0,0],a)([0,0],b)([2,2],b)$ & $\top$ & $\bot$ & $?(\top)$\\
$([0,0],a)([2,2],b)([2,2],b)$ & $\top$ & $\bot$ & $?(\top)$\\
\bottomrule
\end{tabular}%
\end{tabular}%
} &
\raisebox{-\height}{%
\begin{tabular}[t]{@{}l@{}}
\begin{tabular}[t]{@{}l|ccc@{}}
\toprule
$\mathbf{T}_2^9$ & $\epsilon$ & $([0,0],b)$ & $([0,0],c)$\\ \midrule
$\epsilon$ & $\top$ & $\top$ & $\bot$\\
$([0,0],b)$ & $\top$ & $\bot$ & $\top$\\
$([0,0],c)$ & $\bot$ & $\bot$ & $\bot$\\
\midrule
$((2,3),b)$ & $\bot$ & $\bot$ & $\bot$\\
$([2,2],b)$ & $\top$ & $?(\bot)$ & $\top$\\
$([0,0],b)([0,0],b)$ & $\bot$ & $\bot$ & $\bot$\\
$([0,0],b)([0,0],c)$ & $\top$ & $\top$ & $\bot$\\
$([2,2],b)([0,0],c)$ & $\top$ & $?(\top)$ & $\bot$\\
$([0,0],b)([0,0],c)([0,0],b)$ & $\top$ & $?(\bot)$ & $\top$\\
$([0,0],b)([0,0],c)([2,2],b)$ & $\top$ & $?(\bot)$ & $\top$\\
$([2,2],b)([0,0],c)([2,2],b)$ & $\top$ & $?(\bot)$ & $\top$\\
\bottomrule
\end{tabular}\\[55mm]
\begin{tabular}[t]{@{}l@{}}
\begingroup
\setbox0=\hbox{$ u_{\mathit{ce}}^9=(1/3,a)(1/12,b)(1/12,c)(23/12,b)\in\mathcal{L}(\mathcal{N}) $}%
\setbox1=\hbox{$ \rabs{\project{u_{\mathit{ce}}^9}{1}}=((0,1),a)((0,1),b)([2,2],b) $}%
\setbox2=\hbox{$ \rabs{\project{u_{\mathit{ce}}^9}{2}}=((0,1),b)((0,1),c)((1,2),b) $}%
\dimen0=\wd0
\ifdim\wd1>\dimen0 \dimen0=\wd1\fi
\ifdim\wd2>\dimen0 \dimen0=\wd2\fi
\begin{tabular}[t]{@{}l@{}}
$ u_{\mathit{ce}}^9=(1/3,a)(1/12,b)(1/12,c)(23/12,b)\in\mathcal{L}(\mathcal{N}) $\\[-1ex]
$\xrightarrow{\makebox[\dimen0][c]{}}$\\[-0.5ex]
$\rabs{\project{u_{\mathit{ce}}^9}{1}}=((0,1),a)((0,1),b)([2,2],b)$\\[0.5ex]
$\rabs{\project{u_{\mathit{ce}}^9}{2}}=((0,1),b)((0,1),c)((1,2),b)$
\end{tabular}
\endgroup
\end{tabular}
\end{tabular}%
}
\end{tabular}%
}
    \caption{TOTUs and counterexample in round 9}
\label{fig:illustrative-totu-round-9}
\end{figure}

%% file: Tables/IllustrativeTOTURound9.tex
\begin{figure}[H]
\centering

\setlength{\tabcolsep}{2.5pt}
\resizebox{0.98\textwidth}{!}{%
\begin{tabular}{@{}l@{\hspace{4mm}}l@{}}
\raisebox{-\height}{%
\begin{tabular}[t]{@{}l@{}}
\begin{tabular}[t]{@{}l|ccc@{}}
\toprule
$\mathbf{T}_1^{10}$ & $\epsilon$ & $([0,0],a)$ & $([2,2],b)$\\ \midrule
$\epsilon$ & $\top$ & $\top$ & $\bot$\\
$((0,1),a)$ & $\top$ & $\bot$ & $\bot$\\
$((4,5),a)$ & $\bot$ & $\bot$ & $\bot$\\
$([0,0],a)$ & $\top$ & $\bot$ & $\top$\\
\midrule
$([0,0],b)$ & $\bot$ & $\bot$ & $\bot$\\
$((0,1),a)([0,0],a)$ & $\bot$ & $\bot$ & $\bot$\\
$((0,1),a)((0,1),b)$ & $\top$ & $\bot$ & $\top$\\
$((0,1),a)([0,0],b)$ & $\top$ & $\bot$ & $\top$\\
$((0,1),a)([2,2],b)$ & $\bot$ & $\bot$ & $\bot$\\
$([0,0],a)([0,0],a)$ & $\bot$ & $\bot$ & $\bot$\\
$([0,0],a)([0,0],b)$ & $\top$ & $\bot$ & $\top$\\
$([0,0],a)([2,2],b)$ & $\top$ & $\bot$ & $\top$\\
$((0,1),a)((0,1),b)([2,2],b)$ & $\top$ & $\bot$ & $?(\top)$\\
$([0,0],a)([0,0],b)([0,0],b)$ & $\top$ & $\bot$ & $?(\top)$\\
$([0,0],a)([0,0],b)([2,2],b)$ & $\top$ & $\bot$ & $?(\top)$\\
$([0,0],a)([2,2],b)([2,2],b)$ & $\top$ & $\bot$ & $?(\top)$\\
\bottomrule
\end{tabular}%
\end{tabular}%
} &
\raisebox{-\height}{%
\begin{tabular}[t]{@{}l@{}}
\begin{tabular}[t]{@{}l|ccc@{}}
\toprule
$\mathbf{T}_2^{10}$ & $\epsilon$ & $([0,0],b)$ & $([0,0],c)$\\ \midrule
$\epsilon$ & $\top$ & $\top$ & $\bot$\\
$([0,0],b)$ & $\top$ & $\bot$ & $\top$\\
$([0,0],c)$ & $\bot$ & $\bot$ & $\bot$\\
\midrule
$((0,1),b)$ & $\top$ & $?(\bot)$ & $\top$\\
$((2,3),b)$ & $\bot$ & $\bot$ & $\bot$\\
$([2,2],b)$ & $\top$ & $?(\bot)$ & $\top$\\
$((0,1),b)((0,1),c)$ & $\top$ & $?(\top)$ & $\bot$\\
$([0,0],b)([0,0],b)$ & $\bot$ & $\bot$ & $\bot$\\
$([0,0],b)([0,0],c)$ & $\top$ & $\top$ & $\bot$\\
$([2,2],b)([0,0],c)$ & $\top$ & $?(\top)$ & $\bot$\\
$((0,1),b)((0,1),c)((1,2),b)$ & $\top$ & $?(\bot)$ & $\top$\\
$([0,0],b)([0,0],c)([0,0],b)$ & $\top$ & $?(\bot)$ & $\top$\\
$([0,0],b)([0,0],c)([2,2],b)$ & $\top$ & $?(\bot)$ & $\top$\\
$([2,2],b)([0,0],c)([2,2],b)$ & $\top$ & $?(\bot)$ & $\top$\\
\bottomrule
\end{tabular}\\[67mm]
\begin{tabular}[t]{@{}l@{}}
$\mathsf{EQ}(\mathcal{H}_1^{10},\mathcal{H}_2^{10})=\top$
\end{tabular}
\end{tabular}%
}
\end{tabular}%
}
    \caption{TOTUs at the final equivalence query}
\label{fig:illustrative-totu-round-10}
\end{figure}